\numberwithin{equation}{section}
\numberwithin{figure}{section}
\theoremstyle{plain}
\newtheorem{thm}{\protect\theoremname}
\theoremstyle{definition}
\newtheorem{defn}[thm]{\protect\definitionname}
\theoremstyle{remark}
\newtheorem{claim}[thm]{\protect\claimname}
\newenvironment{proof}[1][\protect\proofname]{\par
\normalfont\topsep6\p@\@plus6\p@\relax
\trivlist
\itemindent\parindent
\item[\hskip\labelsep
\scshape
#1]\ignorespaces
}{%
\endtrivlist\@endpefalse
}
\theoremstyle{plain}
\newtheorem{lem}[thm]{\protect\lemmaname}
\theoremstyle{plain}
\newtheorem{fact}[thm]{\protect\factname}
\newcounter{casectr}
\newenvironment{caseenv}
{\begin{list}{{\itshape\ \protect\casename} \arabic{casectr}.}{%
\setlength{\leftmargin}{\labelwidth}
\addtolength{\leftmargin}{\parskip}
\setlength{\itemindent}{\listparindent}
\setlength{\itemsep}{\medskipamount}
\setlength{\topsep}{\itemsep}}
\setcounter{casectr}{0}
\usecounter{casectr}}
{\end{list}}
\providecommand{\casename}{Case}
\providecommand{\claimname}{Claim}
\providecommand{\definitionname}{Definition}
\providecommand{\factname}{Fact}
\providecommand{\lemmaname}{Lemma}
\providecommand{\proofname}{Proof}
\providecommand{\theoremname}{Theorem}
\begin{document}
\author{Thatchaphol Saranurak\thanks{Universität des Saarlandes, \href{mailto:thatchaphol.s@stud.uni-saarland.de}{thatchaphol.s@stud.uni-saarland.de}} 
\and 
Gorav Jindal\thanks{Universität des Saarlandes, \href{mailto:s9gojind@stud.uni-saarland.de}{s9gojind@stud.uni-saarland.de}}}

\title{Subtraction makes computing integers faster}
\maketitle
\begin{abstract}
We show some facts regarding the question whether, for any number
$n$, the length of the shortest Addition Multiplications Chain (AMC)
computing $n$ is polynomial in the length of the shortest division-free
Straight Line Program (SLP) that computes $n$.

If the answer to this question is {}``yes'', then we can show a
stronger upper bound for $\mathrm{PosSLP}$, the important problem
which essentially captures the notion of efficient computation over
the reals. If the answer is {}``no'', then this would demonstrate
how subtraction helps generating integers super-polynomially faster,
given that addition and multiplication can be done in unit time.

In this paper, we show that, for almost all numbers, AMCs and SLPs
need same asymptotic length for computation. However, for one specific
form of numbers, SLPs are strictly more powerful than AMCs by at least
one step of computation.
\end{abstract}

\section{Introduction}

\subsection{Straight Line Programs and $\mathrm{PosSLP}$}

Consider a sequence of registers $(a_{0},a_{1},\dots,a_{l})$ where
$a_{0}=1$ and $a_{l}=n$. A division-free Straight Line Program (SLP)
$P$ is a sequence of \emph{assignments }$\{(*_{i},j_{i},k_{i})\}_{i}$.
Each assignment $(*_{i},j_{i},k_{i})$ represents the operation $a_{i}\leftarrow a_{j_{i}}*_{i}a_{k_{i}}$,
where $*_{i}\in\{+,-,\times\}$ and $0\le j_{i},k_{i}<i$. $P$ can
be viewed as an arithmetic circuit computing integer $n$, where $1$
is the only given constant.

The number which $P$ computes or the result of $P$ is denoted by
$c(P)=n$. The length of $P$ is $l$, and is denoted by $|P|$. We
denote the length of the shortest SLP computing a number $n$ by $\tau(n)$.

The original motivation of this paper is this problem.
\begin{defn}
\textbf{PosSLP} is the following problem: Given a SLP $P$, does $P$
compute a positive number ?
\end{defn}
$\mathrm{PosSLP}$ was introduced by Allender et al. \cite{Allender:2006:CNA:1155440.1155565}
to study the relationship between classical models of computation
and computation over the reals in the Blum-Shub-Smale model \cite{CR98}. 

In the Blum-Shub-Smale model, the inputs are elements of $\bigcup_{n}\mathbb{R}^{n}\triangleq\mathbb{R}^{\infty}$,
and thus each machine accepts a {}``decision problem'' $L\subseteq\mathbb{R}^{\infty}$.
The machine can do all arithmetic operations and $\le$-test over
$\mathbb{R}$ in unit time, and it has a finite set $S$ of real \emph{machine
constants}. The set of decision problems accepted by polynomial-time
machine over $\mathbb{R}$ using only constants from $S\cup\{0,1\}$
is called $\P_{\mathbb{R}}^{S}$. The union of the classes $\P_{\mathbb{R}}^{S}$
over all $S$ is called \emph{polynomial-time over $\mathbb{R}$}
and is denoted $\P_{\mathbb{R}}$. The subclass $\P_{\mathbb{R}}^{\emptyset}$
of {}``constant-free polynomial-time'' is usually denoted by $\P_{\mathbb{R}}^{0}$. 

In order to relate computation over $\mathbb{R}$ to the classical
Boolean complexity classes such as $\P$, $\NP$, $\PSPACE$, etc.,
we consider the \emph{Boolean part} of decision problems over the
reals $L\subseteq\mathbb{R}^{\infty}$, which is defined as $\mathrm{BP}(L)=L\cap\{0,1\}^{*}$.
In other words, we consider only when the inputs are bit strings.
The Boolean part of $\P_{\mathbb{R}}$ is naturally denoted by $\mathrm{BP}(\P_{\mathbb{R}})=\{\mathrm{BP}(L)\mid L\in\P_{\mathbb{R}}\}$.
The relationship between $\mathrm{BP}(\P_{\mathbb{R}})$ and classical
classes has been shown: 
\[
\P/\mathsf{poly}\subseteq\mathrm{BP}(\P_{\mathbb{R}})\subseteq\PSPACE/\mathsf{poly}.
\]

The lower bound is by Koiran \cite{DBLP:journals/tcs/Koiran94} and
the upper bound is by Cucker and Grigoriev \cite{Cucker:1997:PRT:249388.249402}.
In the {}``constant-free'' case $\mathrm{BP}(\P_{\mathbb{R}}^{0})$,
it turns out that $\mathrm{PosSLP}$ precisely captures this class.
In particular, Allender et al. \cite{Allender:2006:CNA:1155440.1155565}
showed that

\[
\P^{\mathrm{PosSLP}}=\mathrm{BP}(\P_{\mathbb{R}}^{0}).
\]
This means that, given an oracle for $\mathrm{PosSLP}$, one can efficiently
simulate the constant-free%
\footnote{Actually, we have even that $\P^{\mathrm{PosSLP}}=\mathrm{BP}(\P_{\mathbb{R}}^{S})$
where $S$ contains only real algebraic constants, as $\P_{\mathbb{R}}^{S}=\P_{\mathbb{R}}^{0}$
\cite{Chapuis97saturationand}. %
} polynomial-time machines of the Blum-Shub-Smale model of real computation
when the inputs are bit strings. 

This shows interesting consequence because the well-known\emph{ Sum-of-square-roots
}problem, which has many applications to computational geometry, is
in $\mathrm{BP}(\P_{\mathbb{R}}^{0})$ \cite{Tiwari1992393}. In this
problem, we are given positive integers $d_{1},\dots,d_{n}$ and $k$,
and we want to decide whether $\sum_{i=1}^{n}\sqrt{d_{i}}\leq k$.
The \emph{Euclidean Traveling Salesman} problem, even though it has
$\PTAS$, is not known to be in $\NP$ because there is no known technique
for exactly verifying the length of a given path without using Sum-of-square-roots
as a procedure. This problem, hence, is in $\NP$ relative to the
\emph{Sum-of-square-roots} problem. 

The above discussion suggests the non-trivial upper bounds for $\mathrm{PosSLP}$
is very interesting. The current best upper bound for PosSLP is between
the $3^{rd}$ and the $4^{th}$ level of the counting hierarchy. In
particular, \textbf{$\mathrm{PosSLP}\in\P^{\PP^{\PP^{\PP}}}$}\cite{Allender:2006:CNA:1155440.1155565},
which implies that the Sum-of-square-roots problem and the Euclidean
Traveling Salesman Problem are also in the counting hierarchy.

There is yet another reason to be interested in $\mathrm{PosSLP}$.
In \cite{Allender:2006:CNA:1155440.1155565}, it was also shown that
the \emph{generic task of numerical computation} ($\mathrm{GTNC}$)
is polynomial time Turing equivalent to $\mathrm{PosSLP}$. Thus,
given the $\mathrm{PosSLP}$ oracle, we gain numerical stability for
any algorithms.

\subsection{Addition Multiplication Chains}

An Addition Multiplication Chain (AMC), introduced in \cite{Bahig2008611},
is an SLP without subtraction. That is, $*\in\{+,\times\}$. It can
be viewed as a monotone arithmetic circuit computing integer. We denote
the length of the shortest AMC computing a number $n$ by $\tau_{+}(n)$. 

Now, we ask whether for every number $n\in\mathbb{N}$, $\tau_{+}(n)\stackrel{?}{=}\tau(n)^{O(1)}$:
is the length of the shortest AMC computing $n$ always polynomial
in the length of the shortest SLP ? If the answer to this question
is {}``yes'', we can show that $\mathrm{PosSLP}$ is in the $2^{nd}$
level of the polynomial hierarchy. Note that $\PH\subseteq\P^{\PP}$
by Toda's theorem. So this is a much stronger upper bound.
\begin{claim}
If $\tau_{+}(n)=\tau(n)^{O(1)}$, then $\mathrm{PosSLP}\in\Sigma_{2}^{\P}$.\end{claim}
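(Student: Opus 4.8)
The plan is to show that under the hypothesis $\tau_+(n) = \tau(n)^{O(1)}$, we can decide $\mathrm{PosSLP}$ with a $\Sigma_2^{\P}$ machine. The key insight should be that the hypothesis lets us guess a short AMC for the absolute value of the number computed by the input SLP, and AMCs, being subtraction-free, compute only positive numbers whose magnitude can be controlled and verified more easily than for general SLPs.

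First I would recall the standard setup: given an SLP $P$ of length $\ell$ computing an integer $n = c(P)$, we want to decide whether $n > 0$. The number $n$ can be doubly-exponentially large (roughly $2^{2^{\ell}}$), so we cannot write it down explicitly; this is precisely what makes $\mathrm{PosSLP}$ hard. The idea is to use the hypothesis to obtain a short certificate. Specifically, $n > 0$ if and only if $|n| = n$, equivalently $n \neq 0$ and there exists a short AMC computing $n$ itself (rather than $-n$ or $|n|$ in some ambiguous sense). Since an AMC is subtraction-free, it can only compute positive integers, so if $c(P) = n$ and there is an AMC $Q$ with $c(Q) = n$, then $n$ is forced to be positive. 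Under the hypothesis, if $n > 0$ then $\tau_+(n) = \tau(n)^{O(1)} \le \ell^{O(1)}$, so such a short AMC exists and can be existentially guessed. This gives the outer $\exists$ quantifier of $\Sigma_2^{\P}$.

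Next I would address verification: having guessed a polynomial-size AMC $Q$, I must verify that $c(Q) = c(P)$, i.e.\ that the two straight-line programs compute the same integer. Equivalently, combining $P$ and $Q$ into a single SLP $R$ computing $c(P) - c(Q)$ via one subtraction, I must verify $c(R) = 0$. This is an instance of the \emph{equivalence / zero-testing} problem for SLPs, $\mathrm{EquSLP}$, which Allender et al.\ showed lies in $\coRP$ and hence in the polynomial hierarchy; more to the point, deciding whether an SLP computes zero can be done in a low level of $\PH$ (indeed it reduces to testing a single integer for being zero, which is in $\Pi$-type classes via modular reductions). The plan is to fold this zero-test into the universal second quantifier: I would express $c(R) = 0$ using a $\forall$ (or $\coRP$-witnessed) check — for instance, verifying $c(R) \equiv 0 \pmod{m}$ for sufficiently many small primes $m$, each of which is checkable in polynomial time by evaluating the SLP modulo $m$. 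Combining the existential guess of the AMC with this $\Pi_1^{\P}$-style verification yields the $\Sigma_2^{\P}$ bound.

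The main obstacle, and the step I would be most careful about, is the verification that the guessed AMC genuinely computes the same integer as $P$: I must ensure the zero-test for $c(P) - c(Q)$ really sits inside the single $\forall$-layer available to a $\Sigma_2^{\P}$ machine, rather than requiring its own oracle call to $\mathrm{PosSLP}$ or a higher counting class. The cleanest route is probably to use the fact that an integer computed by an SLP of length $m$ is zero iff it vanishes modulo every prime in a polynomially-bounded range (since a nonzero integer of bit-length $2^{O(m)}$ has at most $2^{O(m)}$ prime factors, a single well-chosen prime of polynomial bit-length suffices with high probability, and universally quantifying over a polynomial-size witness set derandomizes this), and modular evaluation of an SLP is polynomial-time. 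I would also need a sanity check that $n \neq 0$ is handled, but this is subsumed since no AMC computes $0$ (every AMC result is a positive integer, as it starts from $1$ and applies only $+$ and $\times$), so a successful guess automatically certifies $n > 0$. Assembling these pieces gives $\mathrm{PosSLP} \in \Sigma_2^{\P}$ under the stated hypothesis.
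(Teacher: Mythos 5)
Your proposal is correct and follows essentially the same route as the paper: existentially guess a polynomial-size AMC (which exists iff the computed number is positive, since AMCs compute only positive integers, and the hypothesis bounds its size), then verify equivalence with the given SLP via $\mathrm{EquSLP}\in\coRP$, giving $\mathrm{PosSLP}\in\exists\coRP\subseteq\exists\coNP=\Sigma_{2}^{\P}$. Your extra unpacking of the verification step into a universally quantified modular zero-test is just an explicit rendering of the $\coRP\subseteq\coNP$ containment the paper uses implicitly.
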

\begin{proof}
Given an SLP of size $\tau(n)$, we guess the AMC of size at most
$\tau(n)^{O(1)}$, and then verify the equivalence of the given SLP
and the guessed AMC. 

If the SLP computes a positive number, we can guess a correct AMC,
as $\tau_{+}(n)=\tau(n)^{O(1)}$ by assumption. If not, then no such
AMC exists because AMCs can compute only positive numbers.

To verify the equivalence, it is known that $\mathrm{EquSLP}$, deciding
if two SLPs compute the same number, is in $\mathrm{\coRP}$. So $\mathrm{PosSLP}\in\exists\coRP\subseteq\exists\coNP=\Sigma_{2}^{\P}$.
\end{proof}
On the other hand, it is also interesting to prove that the answer
is {}``no'': to show that there exists some number $n$ for which
$\tau_{+}(n)\notin\tau(n)^{O(1)}$. This would demonstrate how subtraction
helps generate integers super-polynomially faster, given that addition
and multiplication can be done in unit time.

\subsection{Our Results}

In this paper, we show two facts regarding the question, $\tau_{+}(n)\stackrel{?}{=}\tau(n)^{O(1)}$.
\begin{itemize}
\item First, for almost all numbers $n$, SLPs and AMCs need the same asymptotic
length for computing $n$.
\item Second, for one specific form of numbers, SLPs are indeed strictly
more powerful than AMCs by at least one step of computation. The hard
instance we use to establish the lower bound for AMCs is $2^{2^{n}}-1$,
where $n\in\mathbb{N}$.
\end{itemize}

\subsection{Polynomial Version of the Problem}

The question $\tau_{+}(n)\stackrel{?}{=}\tau(n)^{O(1)}$ can be rephrased
as {}``do monotone arithmetic circuits computing \emph{integer $n$}
have significantly less power than usual ones?''. 

In \emph{polynomial} setting, where we are given variables $x_{1},\dots,x_{n}$
and any field constants and we want to compute some polynomial $p(x_{1},\dots,x_{n})$,
this question is well-studied. There are polynomials that required
exponential size monotone arithmetic circuits, but can be computed
by polynomial size usual arithmetic circuits. 

For instance, Valiant \cite{Valiant:1979:NEP:800135.804412} showed
that such a hard instance is the perfect matching polynomial of the
triangle grid graph. One fact used to establish the lower bound for
monotone circuits is that every computed monomial can never be canceled.
To show the upper bound for usual circuits, the fact that perfect
matching polynomial is homogeneous is used; if we know that $p(x_{1},\dots,x_{n})$
is a $d$-degree homogeneous component of another easy polynomial
$q(x_{1},\dots,x_{n})$ that can be computed in $t$ steps, then $p$
can be computed in $O(d^{3}+d^{2}t)$ steps.

In our integer setting, we have none of these nice properties. It
seems that there is no integer counterpart for homogeneous polynomial.
Moreover, each bit of the computed integers can always be canceled
from one to zero. We refer the reader to a survey by Landau and Immerman
\cite{Landau94thesimilarities} for an analogy between polynomial
and integer problems.

\section{Tight asymptotic bound for almost all numbers}

In this section, we show that, for computing almost all numbers from
$2^{n}$ to $2^{n+1}$, both SLPs and AMCs of size $\Theta(n/\log n)$
are sufficient and necessary. For the proof, we use ideas by Brauer
\cite{Bra39} and Erdős \cite{Erd60}, who showed similar results
for addition chains. In \cite{Moreira97}, Moreira showed similar
non-asymptotic bounds for SLPs.
\begin{lem}
\label{lem:compute}For all numbers $z\in[2^{n},2^{n+1})$, $\tau(z),\tau_{+}(z)\in O(n/\log n)$.\end{lem}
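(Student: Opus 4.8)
The plan is to show $\tau(z), \tau_+(z) \in O(n/\log n)$ for any $z \in [2^n, 2^{n+1})$ by constructing an explicit AMC (which is automatically also an SLP, since AMCs are a restriction of SLPs). The strategy is a standard "$m$-ary / block" exponentiation argument from the addition-chain literature. Let me sketch it.

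The key idea: pick a block length $k = \Theta(\log n)$ and write $z$ in base $2^k$, so $z = \sum_{i=0}^{t} d_i \cdot (2^k)^i$ where each digit $d_i \in \{0,1,\dots,2^k-1\}$ and the number of digits is $t+1 = O(n/k)$.

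First, I would precompute all the possible digit values $1, 2, 3, \dots, 2^k - 1$. This is a single precomputation block. Since there are $2^k$ such values and each can be obtained from a smaller one by a single addition ($j \leftarrow (j-1) + 1$), this costs $O(2^k)$ steps. I also precompute the "shift" base: the power $2^k$ (costing $O(k)$ steps by repeated squaring/doubling) — actually it appears for free among the first precomputed values.

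Second, I would use Horner's rule to assemble $z$. Evaluating $z = (\cdots((d_t \cdot 2^k + d_{t-1}) \cdot 2^k + d_{t-2}) \cdots) \cdot 2^k + d_0$ requires, per digit, one multiplication by the already-available constant $2^k$ and one addition of the already-available digit $d_i$. That is $O(1)$ operations per digit, hence $O(t) = O(n/k)$ steps total for the Horner phase.

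I would then balance the two costs. The total length is $O(2^k + n/k)$. Setting $k = \log n - \log\log n$ (so that $2^k = n/\log n$) makes both terms $O(n/\log n)$, giving the claimed bound. The main obstacle, though it is minor, is bookkeeping: verifying that every register referenced in the Horner phase — each digit $d_i$ and the shift constant $2^k$ — is genuinely available from the precomputation block, and confirming that the precomputation really costs only $O(2^k)$ and not more. One must also handle the edge case where a digit $d_i = 0$, in which case the corresponding addition is simply skipped (or one uses that $0$ need not be produced, since multiplying by $2^k$ and adding nothing is fine). Since only additions and multiplications are used and all intermediate values are positive, the construction is a valid AMC, so the single bound $\tau(z), \tau_+(z) \in O(n/\log n)$ follows.
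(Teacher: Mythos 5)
Your proposal is correct and takes essentially the same route as the paper's proof: precompute all $k$-bit digit values with $O(2^k)$ additions, assemble $z$ from its base-$2^k$ representation by Horner's rule in $O(n/k)$ steps, and choose $k\approx\log n-\log\log n$ to balance the two costs. The only cosmetic differences are that the paper handles the remainder block of $r<k$ bits explicitly instead of padding, and precomputes $1,\dots,2^k$ so that the shift constant $2^k$ is literally among the precomputed registers (your "appears for free" remark needs one extra doubling, which is immaterial).
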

\begin{proof}
Let $z\in[2^{n},2^{n+1})$ be any $(n+1)$-bit number, and let $k$
is an integer to be fixed later. We write $n+1=m\cdot k+r$ where
$0\le r<k$.

The idea is to compute $m$ $k$-bit numbers $u_{i}$, and then concatenate
them together to get $z$. We write 
\begin{eqnarray*}
z & = & u_{0}+2^{r}\sum_{j=1}^{m}u_{j}\cdot(2^{k})^{j-1}\\
 & = & u_{0}+2^{r}(u_{1}+2^{k}(u_{2}+2^{k}(\dots(u_{m-1}+2^{k}u_{m})\dots)
\end{eqnarray*}

Now we just compute all $1,2,3,\dots,2^{k}$, taking $2^{k}-1$ assignments.
From this, we have $2^{r}$, $2^{k}$ and $u_{i}<2^{k}$, for all
$i$. There are $m$ additions and $m$ multiplications in the expansion
above. 

Thus, $\tau_{+}(z)\le(2^{k}-1)+2m$. Choose $k=\left\lceil \log n-\log\log n\right\rceil $.
We get $\tau_{+}(z)\in O(n/\log n)$. 

Since SLPs are just AMCs equipped with subtraction, $\tau(z)\le\tau_{+}(z)$.
Thus, $\tau(z)\in O(n/\log n)$ as well.\end{proof}
\begin{lem}
\label{lem:count}AMCs and SLPs of length $n$ can represent at most
$2^{3n\log n}$ different numbers.\end{lem}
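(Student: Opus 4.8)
The plan is to bound the number of distinct numbers computable by a counting argument over the space of all length-$n$ programs. A program of length $n$ is a sequence of $n$ assignments $(*_i, j_i, k_i)$ for $i = 1, \dots, n$, and the number it computes is fully determined by this sequence of triples. So I would first count the number of possible such sequences, and then observe that the number of distinct integers computable cannot exceed the number of distinct programs, since each program outputs exactly one integer.

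For each individual assignment at step $i$, I would count the choices available. The operation $*_i$ ranges over a fixed finite set: $\{+, -, \times\}$ for SLPs (three choices) and $\{+, \times\}$ for AMCs (two choices), so in either case this contributes a constant factor. The indices satisfy $0 \le j_i, k_i < i \le n$, so each of $j_i$ and $k_i$ has at most $n$ choices (in fact at most $i$, but $n$ is a clean uniform bound). Thus each assignment has at most $3 \cdot n \cdot n = 3n^2$ possibilities for an SLP, and fewer for an AMC. Multiplying over all $n$ steps gives at most $(3n^2)^n = 3^n \cdot n^{2n}$ distinct programs.

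The final step is to check that this bound is dominated by $2^{3n\log n}$. Taking $\log_2$ of $3^n n^{2n}$ yields $n \log_2 3 + 2n \log_2 n$, and since the $2n\log_2 n$ term dominates and the constant $3 > 2\log_2 3$ for all sufficiently large $n$, we have $n\log_2 3 + 2n\log_2 n \le 3n\log_2 n$, so the count is at most $2^{3n\log n}$ as claimed. I would state this inequality cleanly and note it holds for $n$ large enough (which suffices for an asymptotic counting bound), taking the logarithm throughout to base $2$ consistent with the rest of the paper.

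I do not expect a serious obstacle here; this is a routine counting estimate. The only point requiring minor care is making the per-step index bound uniform (writing $n$ rather than $i$) and confirming the constant in the exponent: one must verify $\log_2 3 + 2\log_2 n \le 3\log_2 n$, i.e.\ $\log_2 3 \le \log_2 n$, which holds once $n \ge 3$. Since the lemma is an asymptotic statement used later only to derive an $\Omega(n/\log n)$ lower bound on $\tau$ and $\tau_+$ for almost all numbers, this mild restriction on $n$ is harmless.
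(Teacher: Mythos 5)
Your proposal is correct and follows essentially the same counting argument as the paper: the paper bounds the $i^{th}$ assignment by $3i^2$ choices and computes $\prod_i 3i^2 = 3^n(n!)^2 \le 2^{n\log 3}\cdot 2^{2n\log n} \le 2^{3n\log n}$, which is the same estimate you obtain via the slightly looser uniform bound $3n^2$ per step. Your explicit remark that the final inequality needs $n\ge 3$ is a small point the paper leaves implicit, but otherwise the two arguments coincide.
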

\begin{proof}
We will prove the result for only SLPs, because the proof for AMCs
is same. 

We count the possiblities for each assignment. For the $i^{th}$ assignment,
$a_{i}\leftarrow a_{j_{i}}*_{i}a_{k_{i}}$, there are at most $3\times i^{2}$
choices, as there are $3$ types of operation ($2$ types for AMCs)
and $j_{i},k_{i}<i$. Thus, there are at most $\prod_{i=1}^{n}3i^{2}=3^{n}(n!)^{2}\le2^{n\log3}\cdot2^{2n\log n}\le2^{3n\log n}$
combinations, for SLPs of length $n$. \end{proof}
\begin{thm}
For almost all numbers $z\in[2^{n},2^{n+1})$, $\tau(z),\tau_{+}(z)\in\Theta(n/\log n)$,
where {}``almost all'' means the fraction of $1-2^{-\epsilon n}$,
and $0<\epsilon<1$.\end{thm}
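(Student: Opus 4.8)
The plan is to combine the upper bound from Lemma~\ref{lem:compute} with a counting argument based on Lemma~\ref{lem:count}. The upper bound is already in hand: Lemma~\ref{lem:compute} gives $\tau(z),\tau_{+}(z)\in O(n/\log n)$ for \emph{every} $z\in[2^{n},2^{n+1})$, so the entire content of the theorem is the matching lower bound $\tau(z),\tau_{+}(z)\in\Omega(n/\log n)$, which I will show holds for all but a $2^{-\epsilon n}$ fraction of the $z$ in the interval. Since $\tau(z)\le\tau_{+}(z)$, it suffices to prove the lower bound for $\tau$ (the SLP length); the same bound for $\tau_{+}$ follows immediately.

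First I would fix a target length $\ell=c\,n/\log n$ for a small constant $c>0$ to be chosen, and ask how many numbers in $[2^{n},2^{n+1})$ can possibly have $\tau(z)\le\ell$. By Lemma~\ref{lem:count}, the number of distinct values representable by SLPs of length at most $\ell$ is at most (summing the count over lengths $0,1,\dots,\ell$, or just bounding crudely by the count at length $\ell$ times $\ell$) something of the order $2^{3\ell\log\ell}$. I would then estimate $3\ell\log\ell$ with $\ell=c\,n/\log n$: here $\log\ell=\log(cn/\log n)=\log n-\log\log n+O(1)\le\log n$, so
\begin{equation*}
3\ell\log\ell\;\le\;3\cdot\frac{c\,n}{\log n}\cdot\log n\;=\;3c\,n.
\end{equation*}
Thus at most $2^{3cn}$ of the numbers in the interval admit an SLP of length $\le\ell$.

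Next I would compare this count against the total number of candidates, $|[2^{n},2^{n+1})|=2^{n}$. The fraction of numbers in the interval with $\tau(z)\le\ell$ is therefore at most $2^{3cn}/2^{n}=2^{-(1-3c)n}$. Choosing $c$ so that $3c<1$ — for instance any $c<1/3$ — makes the exponent negative, and setting $\epsilon=1-3c\in(0,1)$ gives that at most a $2^{-\epsilon n}$ fraction of the numbers have $\tau(z)<\ell$. Equivalently, at least a $1-2^{-\epsilon n}$ fraction satisfy $\tau(z)\ge\ell=\Omega(n/\log n)$, and hence $\tau_{+}(z)\ge\tau(z)=\Omega(n/\log n)$ as well. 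Combined with the universal upper bound, this yields $\tau(z),\tau_{+}(z)\in\Theta(n/\log n)$ for almost all $z$ in the stated sense.

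The main obstacle is bookkeeping in the exponent rather than any conceptual difficulty: I must track the logarithmic factors carefully so that $3\ell\log\ell$ really does land below $n$ after multiplying by the constant, and I must confirm that summing the representability count over all lengths up to $\ell$ (not just the single length $\ell$) only changes the bound by a polynomial factor that is absorbed into the exponent. The one genuine subtlety is ensuring the constant $c$ can be chosen strictly less than $1/3$ while still leaving $\ell$ large enough that the argument is non-vacuous; since the target is merely $\Omega(n/\log n)$ and any fixed $c<1/3$ works, there is ample room, and the resulting $\epsilon=1-3c$ can be taken anywhere in $(0,1)$.
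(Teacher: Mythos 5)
Your proposal is correct and follows essentially the same route as the paper: both combine the universal upper bound of Lemma~\ref{lem:compute} with the counting bound of Lemma~\ref{lem:count}, and your choice of $\ell=c\,n/\log n$ with $\epsilon=1-3c$ is the paper's parametrization with $c=\frac{1}{3d}$ and $\epsilon=1-\frac{1}{d}$ under a renamed constant. The only (harmless) extra care you take is noting that one should account for all lengths up to $\ell$ rather than exactly $\ell$, which changes nothing asymptotically.
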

\begin{proof}
We only prove this for SLPs again. Let $d>1$ be some constant. By
Lemma \ref{lem:count}, SLPs of size at most $\frac{n}{3d\log n}$
can represent at most $2^{n/d}$ different numbers. 

Consider the numbers from $2^{n}$ to $2^{n+1}-1$. There are $2^{n}$
numbers but only $2^{n/d}$ of them can be computed by SLPs of size
$\frac{n}{3d\log n}$. Let $\epsilon=1-\frac{1}{d}$. Thus, there
are $2^{n}-2^{n/d}=2^{n}(1-2^{-\epsilon n})$ numbers which need SLPs
of size at least $\frac{n}{3d\log n}$. That is, almost all numbers
in the range required SLPs of size $\Omega(n/\log n)$.

But Lemma \ref{lem:compute} states that all numbers in this range
can be computed by SLPs of size $O(n/\log n)$. This concludes the
proof.
\end{proof}
Since any number $z$ is in $[2^{n},2^{n+1})$ for some $n$, we conclude
that $\tau_{+}(z)=\Theta(\tau(z))$ for almost all $z$.

\section{Gap of One}

In this section, we study the numbers of the form $2^{2^{n}}-1$,
where $n\in\mathbb{N}$. For convenience, let $N=2^{2^{n}}$. We show
that $\tau(N-1)\le n+2$, but $\tau_{+}(N-1)\ge n+3$.
\begin{claim}
\label{S_N-1 <=00003D n+2}$\tau(N-1)\le n+2$\end{claim}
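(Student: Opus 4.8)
The plan is to exhibit an explicit SLP of length $n+2$ that computes $N-1 = 2^{2^n}-1$, which immediately yields $\tau(N-1) \le n+2$. The governing idea is repeated squaring to reach $N = 2^{2^n}$ cheaply, followed by a single subtraction of the initial constant $1$ to obtain $N-1$.

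Concretely, I would start from the given register $a_0 = 1$ and first perform one addition $a_1 \leftarrow a_0 + a_0 = 2 = 2^{2^0}$. Then I would apply $n$ successive squarings $a_{i} \leftarrow a_{i-1} \times a_{i-1}$ for $i = 2, \dots, n+1$. A straightforward induction shows that $a_i = 2^{2^{i-1}}$ for every $i \ge 1$: the base case $a_1 = 2^{2^0}$ holds, and if $a_{i-1} = 2^{2^{i-2}}$ then $a_i = (2^{2^{i-2}})^2 = 2^{2^{i-1}}$. In particular $a_{n+1} = 2^{2^n} = N$, and this portion of the program uses exactly one addition together with $n$ multiplications, i.e.\ $n+1$ assignments.

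Finally, I would add one subtraction $a_{n+2} \leftarrow a_{n+1} - a_0 = N - 1$, using the fact that the constant $1$ is still available in register $a_0$. This is the only place where subtraction (and hence the SLP rather than AMC model) is used, and it brings the total to $(n+1) + 1 = n+2$ assignments. Since this SLP computes $N-1$, we conclude $\tau(N-1) \le n+2$.

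I do not expect any genuine obstacle here, as the statement is an upper bound and the construction is fully explicit; the only point requiring care is the bookkeeping of the exponents and the off-by-one in the step count, which the induction above settles. The substantive difficulty in this section lies entirely in the companion lower bound $\tau_+(N-1) \ge n+3$ for the subtraction-free model, not in this claim.
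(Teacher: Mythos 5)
Your construction is correct and is exactly the paper's proof: one doubling to get $a_{1}=2$, then $n$ squarings to reach $a_{n+1}=2^{2^{n}}=N$, then one subtraction $a_{n+2}\leftarrow a_{n+1}-a_{0}$, for a total of $n+2$ assignments. Your added induction on the exponents is just explicit bookkeeping of the same argument.
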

\begin{proof}
We first compute $a_{1}\leftarrow a_{0}+a_{0}$. So $a_{1}=2$. Then
we square the current result $n$ times to get $a_{n+1}=2^{2^{n}}=N$.
Finally, $a_{n+2}\leftarrow a_{n+1}-a_{0}$, as desired.
\end{proof}
We prove the lower bound for $\tau_{+}(N-1)$ in 3 steps:
\begin{itemize}
\item There are more than $2$ additions in any AMC computing $N-1$.
\item There cannot be exactly $3$ additions in any AMC computing $N-1$.
\item There are less than $4$ additions in any AMC of length at most $n+2$
which computes $N-1$ .
\end{itemize}

\subsection{More than 2 Additions}

In this subsection, we show that there is no AMC computing $N-1$
using only $2$ or less additions. 

When there are few additions in an AMC, we know some facts about the
values of the registers $(a_{0},a_{1},\dots,a_{l})$ assigned in each
step.
\begin{fact}
\label{fact: factor-by-addition }The registers assigned in an AMC
$P$ between the $k^{th}$ and $k+1^{th}$ additions have values of
the form $\prod_{i=1}^{k}\alpha_{i}^{e_{i}}$ where $\alpha_{i}=\prod_{j<i}\alpha_{j}^{f_{j}}+\prod_{j<i}\alpha_{j}^{f'_{j}}$
and, for all $j$, $\min\{f_{j},f'_{j}\}=0$. In particular, $\alpha_{1}=2$,
and $\alpha_{2}=2^{c}+1$, for some $c\ge0$.\end{fact}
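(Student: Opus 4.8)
The plan is to prove Fact~\ref{fact: factor-by-addition } by induction on the number of registers, tracking the structure of register values through the program. The key observation is that an AMC uses only $+$ and $\times$, and crucially, the value $2$ (which is $a_0 + a_0$) must be produced by an addition. So let me define the additions as the {}``generators'' of new multiplicative structure: every time we perform an addition, we create a genuinely new base value $\alpha_i$, whereas every multiplication only combines existing values into a product. The claim is that between consecutive additions, all register values are monomials $\prod_{i=1}^{k}\alpha_i^{e_i}$ in the base values $\alpha_1,\dots,\alpha_k$ produced by the first $k$ additions.

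First I would set up the induction carefully. Before any addition occurs, the only available register is $a_0 = 1$, so every register computable by multiplication alone equals $1 = \prod \alpha_i^0$, the empty product; this is the base case. For the inductive step, suppose the claim holds for all registers appearing before the $k^{th}$ addition, i.e.\ each such register has the form $\prod_{i=1}^{k-1}\alpha_i^{e_i}$. When the $k^{th}$ addition is performed, say $a_j + a_{j'}$, both operands are among the already-computed registers, hence both are monomials $\prod_{i<k}\alpha_i^{f_i}$ and $\prod_{i<k}\alpha_i^{f'_i}$ in the previous bases. Defining $\alpha_k \triangleq \prod_{j<k}\alpha_j^{f_j} + \prod_{j<k}\alpha_j^{f'_j}$ exactly matches the stated form of the new generator. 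Any register computed after this addition but before the next one is obtained by multiplying together previously available values, each of which is a monomial in $\alpha_1,\dots,\alpha_k$; a product of such monomials is again a monomial in $\alpha_1,\dots,\alpha_k$, which closes the induction.

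Next I would address the normalization condition $\min\{f_j,f'_j\}=0$. This is not automatic from the recursion itself but follows by a {}``reduced form'' argument: if both exponents $f_j$ and $f'_j$ were positive for some $j$, we could factor out $\alpha_j^{\min\{f_j,f'_j\}}$ from both summands, writing $\alpha_k = \alpha_j^{\min\{f_j,f'_j\}}\bigl(\prod_{i<k}\alpha_i^{f_i - \delta_i} + \prod_{i<k}\alpha_i^{f'_i - \delta_i}\bigr)$ where $\delta_i$ is the common exponent. The parenthesized factor is then a \emph{new} addition-generated value with the minimum-exponent property, and the outer $\alpha_j^{\cdots}$ power can be absorbed into the multiplicative exponent vector of whatever register uses $\alpha_k$. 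Thus we may always assume the generators are written in this reduced form without loss of generality. Finally, the specific identities $\alpha_1 = 2$ and $\alpha_2 = 2^c + 1$ follow by specialization: the first addition can only add $a_0 = 1$ to $a_0 = 1$, giving $\alpha_1 = 1 + 1 = 2$; the second addition combines two monomials in $\alpha_1 = 2$, i.e.\ two powers of $2$, giving $2^a + 2^b$, which in reduced form (pulling out $2^{\min\{a,b\}}$) becomes $2^c + 1$ for $c = |a-b| \ge 0$.

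I expect the main obstacle to be stating the normalization step rigorously rather than the structural induction, which is routine. The subtlety is that the reduced-form rewriting changes the multiplicative exponents attached to $\alpha_k$ in downstream registers, so one must verify these bookkeeping shifts preserve the global monomial form $\prod_{i=1}^{k}\alpha_i^{e_i}$ for every register. As long as the induction hypothesis is phrased to assert the existence of \emph{some} valid representation (rather than a unique one), the absorption of common factors causes no difficulty, and the claimed reduced form with $\min\{f_j,f'_j\}=0$ is achievable throughout.
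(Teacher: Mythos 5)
Your proposal is correct and follows essentially the same route as the paper: induction on the number of additions, with multiplications preserving monomial form and each addition introducing a new generator $\alpha_k$ whose common factors are pulled out and absorbed into the multiplicative exponents of the resulting register. The paper performs the normalization $\min\{f_j,f'_j\}=0$ directly at the moment of the addition (writing the new register as $\prod_i\alpha_i^{\min(b_i,c_i)}\times\alpha_k$), which is exactly the bookkeeping you describe.
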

\begin{proof}
We prove the statement by induction on $k$. For the base case, the
numbers computed in $P$ before the first addition can be only $1=\prod_{i=1}^{0}\alpha_{i}^{e_{i}}$. 

For the inductive step, consider the $k^{th}$ addition assignment
$a\leftarrow b+c$, where $b=\prod_{i=1}^{k-1}\alpha_{i}^{b_{i}}$
and $c=\prod_{i=1}^{k-1}\alpha_{i}^{c_{i}}$, by the induction hypothesis.
Thus, $a=\prod_{i=1}^{k-1}\alpha_{i}^{b_{i}}+\prod_{i=1}^{k-1}\alpha_{i}^{c_{i}}$.
We pull the shared factors out and get $a=\prod_{i=1}^{k-1}\alpha_{i}^{\min(b_{i},c_{i})}\times\alpha_{k}$
where $\alpha_{k}=\prod_{i=1}^{k-1}\alpha_{i}^{f_{i}}+\prod_{i=1}^{k-1}\alpha_{i}^{f'_{i}}$
and, for all $i$, $\min\{f_{i},f'_{i}\}=0$. Since multiplications
can not yield new factors into the sequence, all the numbers computed
before the $k+1^{th}$ addition would be of the form $\prod_{i=1}^{k}\alpha_{i}^{e_{i}}$. 

The first factor is $\alpha_{1}=2$, because the first addition is
$1+1$. Since we just show that $\alpha_{2}=\alpha_{1}^{f_{1}}+\alpha_{1}^{f'_{1}}$
where $\min\{f_{1},f'_{1}\}=0$, as $\alpha_{1}=2$, it follows that
$\alpha_{2}=2^{c}+1$, for some $c\ge0$.
\end{proof}
Next, we state two known properties of the number $N-1$ and Fermat
number (cf. \cite{krizek200217}).
\begin{fact}
\label{fact: factor of N-1}$N-1=F_{n}-2=\prod_{i=0}^{n-1}F_{i}$
where $F_{i}=2^{2^{i}}+1$ is the $i^{th}$ Fermat number.
\end{fact}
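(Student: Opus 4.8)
The plan is to treat the two equalities separately. The identity $N-1 = F_n - 2$ is immediate from the definitions, since $F_n = 2^{2^n}+1 = N+1$, so that $F_n - 2 = N-1$. Hence the content of the statement lies entirely in the product formula $N-1 = 2^{2^n}-1 = \prod_{i=0}^{n-1} F_i$, and this is what I would concentrate on.

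I would prove the product formula by induction on $n$, with the engine being the difference-of-squares factorization. For the base case $n=1$ we have $2^{2^1}-1 = 3 = 2^{2^0}+1 = F_0$. For the inductive step, assume $2^{2^n}-1 = \prod_{i=0}^{n-1} F_i$. Multiplying both sides by $F_n = 2^{2^n}+1$ and applying $(x-1)(x+1)=x^2-1$ with $x=2^{2^n}$ gives
\[
\prod_{i=0}^{n} F_i = \bigl(2^{2^n}-1\bigr)\bigl(2^{2^n}+1\bigr) = \bigl(2^{2^n}\bigr)^2 - 1 = 2^{2^{n+1}} - 1,
\]
which is precisely the desired formula for $n+1$.

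An equivalent and perhaps more transparent route is direct telescoping: repeatedly factoring $2^{2^k}-1 = (2^{2^{k-1}}-1)(2^{2^{k-1}}+1) = (2^{2^{k-1}}-1)\,F_{k-1}$ down to $k=0$, where the residual factor $2^{2^0}-1 = 1$ disappears, leaves exactly the product $F_{n-1}F_{n-2}\cdots F_0$. Either presentation reduces the claim to iterating the single algebraic identity $x^2-1 = (x-1)(x+1)$.

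I do not anticipate any genuine obstacle here, as this is a classical elementary identity about Fermat numbers. The only point requiring a little care is the bookkeeping of the index range of the product: matching each factor $F_i = 2^{2^i}+1$ to the correct stage of the difference-of-squares expansion, and handling the endpoints of the range (including the convention that the empty product at $n=0$ equals $1$) so that the factor count comes out exactly right.
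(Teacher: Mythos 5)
Your proof is correct. The paper does not actually prove this fact---it is stated as a known property of Fermat numbers with a citation to the K\v{r}\'i\v{z}ek--Luca--Somer book---so there is no in-paper argument to compare against; your induction via the identity $(2^{2^{n}}-1)(2^{2^{n}}+1)=2^{2^{n+1}}-1$ (equivalently, the telescoping factorization) is the standard derivation and establishes the statement cleanly, including the correct handling of the endpoints of the product.
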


\begin{fact}
\label{fact:Fermat-coprime}Fermat numbers are coprime to each other.
\end{fact}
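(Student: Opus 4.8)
The plan is to deduce pairwise coprimality directly from the product identity of Fact~\ref{fact: factor of N-1}, rather than from any explicit recurrence. First I would fix two distinct indices $m<n$ and suppose, toward a contradiction, that some prime $p$ divides both $F_m$ and $F_n$. The crucial observation is that the identity $F_n-2=\prod_{i=0}^{n-1}F_i$ displays $F_m$ as one of the factors on the right-hand side, since $0\le m\le n-1$. Hence $F_m\mid F_n-2$, and in particular $p\mid F_n-2$.

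Next I would combine the two divisibility relations: as $p$ divides both $F_n$ and $F_n-2$, it must divide their difference, so $p\mid 2$ and therefore $p=2$. But every Fermat number $F_i=2^{2^i}+1$ is odd, so no prime dividing $F_n$ can equal $2$. This contradiction shows that $F_m$ and $F_n$ share no common prime factor, i.e.\ $\gcd(F_m,F_n)=1$, which is exactly the claim.

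Since we are permitted to assume Fact~\ref{fact: factor of N-1}, the argument is essentially a two-line classical divisibility computation, and I do not expect any genuine obstacle here. The only point requiring a moment of care is to invoke the product identity at the \emph{larger} index $n$, so that $F_m$ appears among its factors for every $m\le n-1$, rather than re-deriving the identity from scratch; once that is in place, the conclusion is immediate.
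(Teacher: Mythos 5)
Your proof is correct. Note that the paper itself offers no proof of this fact at all: it is stated as a known property of Fermat numbers with a citation to the Krizek--Luca--Somer book, so there is nothing to compare against line by line. What you give is the classical Goldbach argument: from $F_n-2=\prod_{i=0}^{n-1}F_i$ one gets $F_m\mid F_n-2$ for $m<n$, so a common prime divisor of $F_m$ and $F_n$ would divide $2$, which is impossible since Fermat numbers are odd. This is exactly the standard derivation and it meshes well with the paper, since the product identity is already available as Fact~\ref{fact: factor of N-1}. The one small point worth flagging is that the paper states Fact~\ref{fact: factor of N-1} for the particular $n$ fixed at the top of Section~3 (where $N=2^{2^n}$), whereas your argument needs the identity at an arbitrary index $n\ge 1$; the identity does hold for every such $n$, but you should say so explicitly rather than lean on the section's fixed $N$.
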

\begin{lem}
\label{lem:low expo}In an AMC with $k$ additions computing $N-1=\prod_{i=1}^{k}\alpha_{i}^{e_{i}}$,
we have $e_{1}=0$, and $e_{2}\le1$. Moreover, if $e_{2}=1$, then
there is a unique index $j$ such that $F_{j}\mid\alpha_{2}$.\end{lem}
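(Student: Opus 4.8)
The plan is to dispatch the three assertions in order, isolating the number-theoretic heart in the treatment of $\alpha_2$. For $e_1$, recall from Fact~\ref{fact: factor-by-addition } that $\alpha_1 = 2$, while $N-1 = 2^{2^n}-1$ is odd; hence $2 \nmid N-1$, and the exponent of $\alpha_1$ in the factorization $N-1 = \prod_i \alpha_i^{e_i}$ must be $e_1 = 0$. For the remaining claims, write $\alpha_2 = 2^c + 1$ with $c \ge 0$, again by Fact~\ref{fact: factor-by-addition }. If $c = 0$ then $\alpha_2 = 2$ is even, so $e_2 = 0 \le 1$ and the moreover clause is vacuous; I therefore assume $c \ge 1$, so that $\alpha_2 \ge 3$ is odd, and I assume $e_2 \ge 1$, which forces $\alpha_2 \mid N-1 = \prod_{i=0}^{n-1} F_i$ by Fact~\ref{fact: factor of N-1}.

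The core step is to show that $\alpha_2$ meets exactly one Fermat factor. The key arithmetic input is the identity $\gcd(2^a+1, 2^b+1) = 2^{\gcd(a,b)}+1$ when $a$ and $b$ have the same $2$-adic valuation, and $\gcd(2^a+1,2^b+1)=1$ otherwise; I would cite or quickly derive this from $2^c \equiv -1 \pmod{2^c+1}$. Writing $j := \nu_2(c)$ for the exponent of the largest power of two dividing $c$ and taking $a = c$, $b = 2^i$ (so that $\nu_2(2^i) = i$), the identity gives $\gcd(\alpha_2, F_i) = 1$ for every $i \ne j$, while $\gcd(\alpha_2, F_j) = 2^{\gcd(c, 2^j)}+1 = 2^{2^j}+1 = F_j$, since $\nu_2(c)=j$ forces $\gcd(c, 2^j) = 2^j$. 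Thus $F_j \mid \alpha_2$, which already yields the existence part of the moreover clause, with uniqueness being uniqueness of $\nu_2(c)$. Because $\alpha_2 > 1$ divides $\prod_{i=0}^{n-1} F_i$ yet is coprime to every factor but $F_j$, we further obtain $j \le n-1$ (otherwise $\alpha_2$ would be coprime to the whole product) and $\alpha_2 \mid F_j$; combined with $F_j \mid \alpha_2$ this pins down $\alpha_2 = F_j$.

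Finally, $e_2 \le 1$ follows at once: by Fact~\ref{fact:Fermat-coprime} the Fermat numbers are pairwise coprime, so the exact power of $\alpha_2 = F_j$ dividing $N-1 = \prod_i F_i$ equals the power of $F_j$ dividing $F_j$, namely $1$; in particular $\alpha_2^2 \nmid N-1$, so $e_2 \le 1$. I expect the main obstacle to be the $\gcd(2^a+1, 2^b+1)$ computation, which packages the entire $2$-adic valuation analysis that confines $\alpha_2$ to a single Fermat number; once that is in hand, everything else is bookkeeping with the coprime factorization of $N-1$ supplied by Facts~\ref{fact: factor of N-1} and \ref{fact:Fermat-coprime}.
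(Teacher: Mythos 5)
Your proof is correct and follows essentially the same route as the paper's: $e_{1}=0$ from the parity of $N-1$, then identification of $F_{j}$ with $j=\nu_{2}(c)$ as the unique Fermat divisor of $\alpha_{2}=2^{c}+1$, then pairwise coprimality of the Fermat numbers to force $e_{2}\le1$. The only real difference is presentational: where the paper computes $2^{c}+1\bmod F_{k}$ explicitly in the two cases $k<j$ and $k>j$, you package the same $2$-adic analysis into the standard formula for $\gcd(2^{a}+1,2^{b}+1)$, which also yields the slightly stronger (but unneeded) conclusion $\alpha_{2}=F_{j}$ and handles the $c=0$ corner case that the paper glosses over.
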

\begin{proof}
By Fact \ref{fact: factor-by-addition }, $\alpha_{1}=2$. But $N-1$
is odd, so $e_{1}=0$. 

As we know $\alpha_{2}=2^{c}+1$, for some $c$, we now claim that
there is a unique index $j$, such that $F_{j}\mid\alpha_{2}$. 

There are $2$ cases. If $c$ is a power of $2$, $c=2^{j}$, then
$\alpha_{2}=2^{2^{j}}+1=F_{j}$. Hence, $j$ is a unique index such
that $F_{j}\mid\alpha_{2}$ by Fact \ref{fact:Fermat-coprime}.

Otherwise, $c$ is not a power of $2$, $c=2^{j}\cdot w$, for some
odd $w$, then $\alpha_{2}=(2^{2^{j}})^{w}+1=(-1)^{w}+1=0\bmod F_{j}$.
Thus, $F_{j}\mid\alpha_{2}$. For uniqueness, we show that for $k\neq j$,
$F_{k}\nmid\alpha_{2}$. If $k<j$, then $\alpha_{2}=(2^{2^{k}})^{2^{j-k}\cdot w}+1=(-1)^{2^{j-k}\cdot w}+1=1+1=2\bmod F_{k}$.
If $k>j$, then we write $w=w'\cdot2^{k-j}+w''$, where $w''<2^{k-j}$.
We have $\alpha_{2}=2^{2^{j}w}+1=2^{2^{k}w'}\cdot2^{2^{j}w''}+1=(\pm1)\cdot2^{2^{j}w''}+1\bmod F_{k}$.
Since $w''<2^{k-j}$, so $2^{2^{j}w''}+1<2^{2^{k}}+1=F_{k}$. So $\alpha_{2}=\pm2^{2^{j}w''}+1\neq0\bmod F_{k}$.

Now, we show that $e_{2}\le1$. Since, $F_{j}^{e_{2}}\mid\alpha_{2}^{e_{2}}\mid N-1=\prod_{i=0}^{n-1}F_{i}$.
If $e_{2}>1$, then $F_{j}\mid\prod_{i\neq j}F_{i}$ . But this contradicts
the coprimality of Fermat numbers, see Fact \ref{fact:Fermat-coprime}.\end{proof}
\begin{thm}
\label{thm:more than 2}If AMC computes $N-1$, there are strictly
more than $2$ additions.\end{thm}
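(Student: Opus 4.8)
The plan is to argue by contradiction, exploiting the multiplicative structure forced by Fact~\ref{fact: factor-by-addition } together with the arithmetic of Fermat numbers. Suppose some AMC computes $N-1$ using at most two additions. Then the final value $N-1$ is one of the register values occurring after all additions, so by Fact~\ref{fact: factor-by-addition } it must have the form $\prod_{i=1}^{k}\alpha_i^{e_i}$ with $k\le 2$, where $\alpha_1=2$ and $\alpha_2=2^c+1$. Lemma~\ref{lem:low expo} then pins the exponents down severely: $e_1=0$ (since $N-1$ is odd while $\alpha_1=2$) and $e_2\le 1$. First I would use these constraints to reduce to a single surviving case, namely $k=2$ and $e_2=1$.

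The trivial cases are quickly dispatched. If $k\le 1$, then $N-1=\alpha_1^{e_1}=2^0=1$; and if $k=2$ with $e_2=0$, then again $N-1=\alpha_2^0=1$. Both contradict $N-1=2^{2^{n}}-1>1$ for $n\ge 1$. The only remaining possibility is $k=2$ and $e_2=1$, i.e. $N-1=\alpha_2=2^c+1$. Here I would invoke the uniqueness clause of Lemma~\ref{lem:low expo}: exactly one index $j$ satisfies $F_j\mid\alpha_2$. But Fact~\ref{fact: factor of N-1} gives $N-1=\prod_{i=0}^{n-1}F_i$, so for $n\ge 2$ every one of the $n$ distinct Fermat numbers $F_0,\dots,F_{n-1}$ divides $\alpha_2=N-1$. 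Since they are pairwise coprime (Fact~\ref{fact:Fermat-coprime}) and $n\ge 2$, at least two distinct indices $j$ meet $F_j\mid\alpha_2$, contradicting the uniqueness. This closes the final case.

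The entire difficulty is concentrated in that last case; the earlier ones are just bookkeeping. The crux is the observation that forcing only two additions collapses $N-1$ into a single factor of the shape $2^c+1$, which (by the divisibility analysis already carried out in Lemma~\ref{lem:low expo}) can absorb only one Fermat prime factor, and this is flatly incompatible with $N-1$ being a product of $n\ge 2$ coprime Fermat numbers. I would also flag the small-$n$ caveat explicitly: for $n=1$ the number $N-1=3$ really is computable with two additions ($2=1+1$, then $3=2+1$), so the statement should be read for $n\ge 2$, which is exactly the regime relevant to the intended $\tau_{+}(N-1)\ge n+3$ lower bound.
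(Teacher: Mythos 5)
Your proof is correct and follows essentially the same route as the paper's: dispatch the $0$- and $1$-addition cases by parity/triviality, then use Lemma~\ref{lem:low expo} to force $N-1=\alpha_2=2^c+1$ with a unique Fermat divisor, contradicting $N-1=\prod_{i=0}^{n-1}F_i$. Your explicit caveat that the statement requires $n\ge 2$ (since $N-1=3$ is computable with two additions when $n=1$) is a valid observation that the paper leaves implicit.
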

\begin{proof}
If there is no addition, then the result is trivially $1$. 

If there is $1$ addition, then the result can only be even number,
but $N-1$ is odd.

If there are $2$ additions, $N-1=\alpha_{1}^{e_{1}}\alpha_{2}^{e_{2}}$.
By Lemma \ref{lem:low expo}, it must be that $N-1=\alpha_{2}$, and
there is only one $j$, where $F_{j}\mid\alpha_{2}$. So for $k\neq j$,
$F_{k}\nmid N-1$, which contradicts Fact \ref{fact: factor of N-1}
: $N-1=\prod_{i=0}^{n-1}F_{i}$.
\end{proof}

\subsection{Not 3 Additions}

To prove the main statement in this subsection, we need this lemma. 
\begin{lem}
\label{lem:no 1 expo}Let $S=\{2^{2^{0}},2^{2^{1}},2^{2^{2}},\dots\}$
be the sequence of all numbers of the form $2^{2^{i}}$. Let $A\subset\mathbb{Z}\backslash\{0,\pm1\}$,
$B\subset\mathbb{Z}^{+}\backslash\{1\}$ and \textup{$C\subset\mathbb{Z}^{+}$}
be finite sets.

For any increasing sequence $T$ of the form \textup{$\{c_{i}b_{i}^{e_{i}}+a_{i}\}_{i}$,
where $e_{i}\in\mathbb{Z}^{+}$ and $a_{i},b_{i},c_{i}\in A,B,C$
respectively, there are infinitely many indices $i$ such that $S_{i}\neq T_{i}$.
}%
\footnote{$a_{i}$ does not denote the $i^{th}$ element of $A$, but it is
just some element in $A$ which constitutes the $i^{th}$ element
of $T$, $c_{i}b_{i}^{e_{i}}+a_{i}$. $b_{i}$ and $c_{i}$ have similar
meaning. %
}\end{lem}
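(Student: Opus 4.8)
The plan is to argue by contradiction. Suppose $S_i=T_i$ for all but finitely many $i$, say for all $i\ge i_0$; then for every such $i$ we have
\[
2^{2^i}=c_i\,b_i^{\,e_i}+a_i,\qquad a_i\in A,\ b_i\in B,\ c_i\in C,\ e_i\in\mathbb{Z}^+ .
\]
Since $A,B,C$ are finite, only finitely many triples $(a_i,b_i,c_i)$ can occur, so by pigeonhole some fixed triple $(a,b,c)$ with $a\neq 0,\pm1$, $b\ge 2$, $c\ge 1$ occurs for an infinite index set $I$, giving $2^{2^i}-a=c\,b^{\,e_i}$ for $i\in I$. Because $T$ is increasing and $a,b,c$ are bounded, the exponents satisfy $e_i\to\infty$; moreover $a\notin\{0,\pm1\}$ guarantees that $a(a-1)$ and $a^{t}-a$ (for $t\ge 2$) are nonzero, a fact I will use repeatedly.

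First I would extract the structural consequence that for every $i\in I$ the integer $2^{2^i}-a=c\,b^{\,e_i}$ has all of its prime factors in the fixed finite set $\Pi=\operatorname{primes}(b)\cup\operatorname{primes}(c)$; equivalently $2^{2^i}$ and $c\,b^{\,e_i}$ are both $\Pi'$-units (for $\Pi'=\{2\}\cup\Pi$) whose difference is the fixed nonzero constant $a$. The elementary tools I would then bring to bear are, first, the doubling identity
\[
2^{2^{i+1}}-a=\bigl(2^{2^i}-a\bigr)\bigl(2^{2^i}+a\bigr)+a(a-1),
\]
which tracks the $q$-adic valuation $v_q(2^{2^i}-a)=v_q(c)+e_i\,v_q(b)\to\infty$ at a prime $q\mid b$; and second, for two indices $i<j$ in $I$, the divisibility $\bigl(2^{2^i}-a\bigr)\mid\bigl(2^{2^j}-a\bigr)$ with quotient $b^{\,e_j-e_i}$, which via $x^{t}\equiv a^{t}\pmod{x-a}$ (with $x=2^{2^i}$, $t=2^{\,j-i}$) forces $\bigl(2^{2^i}-a\bigr)\mid\bigl(a^{t}-a\bigr)$ and hence $2^{2^i}\le|a|^{\,2^{\,j-i}}+|a|$.

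These estimates already yield a great deal: comparing magnitudes shows that consecutive elements of $I$ satisfy $j\gtrsim 2i$, and a multiplicative-order refinement — from $2^{\,2^j-2^i}\equiv 1\pmod{q^{\,V_i}}$ with $V_i=v_q(2^{2^i}-a)$, whose order is at least $q^{\,V_i-O(1)}$ — pushes the gaps even higher. This is exactly where the main obstacle lies. The recurring triple need only appear along a very sparse subsequence, so the clean ``take consecutive indices $i,i+1$'' contradiction — which would follow instantly if both indices carried the same triple, since then the doubling identity would force $c\,b^{\,e_i}\mid a(a-1)$ with the left side unbounded — is not directly available, and the purely elementary size, valuation, and order bounds force $I$ to be extremely sparse but do not by themselves make it finite. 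To close this gap I would invoke a genuine Diophantine finiteness result: the equation $2^{X}-c\,b^{Y}=a$ in nonnegative integers $X,Y$, for fixed $a\neq 0$ and fixed $b,c$, has only finitely many solutions (a Pillai-type statement, obtainable from the theorem on $S$-unit equations, i.e.\ from Baker's bounds for linear forms in logarithms). Applying it with $X=2^i$, $Y=e_i$ bounds the number of admissible $i$, contradicting $|I|=\infty$ and proving the lemma.
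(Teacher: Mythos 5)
Your proposal is correct, but it reaches the conclusion by a genuinely different and far less elementary route than the paper. After negating the conclusion you pigeonhole globally on the whole triple $(a_i,b_i,c_i)$, obtain an infinite set $I$ on which $2^{2^i}=c\,b^{e_i}+a$ for a fixed triple, and then --- having correctly observed that your size, valuation and order estimates only force $I$ to be sparse --- you close the argument by citing the finiteness of solutions of $2^{X}-c\,b^{Y}=a$ (a two-term $S$-unit equation, hence Baker-type machinery). That citation is legitimate (with $a\neq 0$, $b\geq 2$, $c\geq 1$ fixed, the equation does have finitely many solutions), so the proof stands, but it outsources the entire difficulty to a deep external theorem. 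The paper stays elementary, and the trick it uses is precisely the structure your global pigeonhole discards: under your contradiction hypothesis \emph{every} sufficiently large index satisfies $S_k=T_k$, so one may pigeonhole only on the base $b_k$ inside a window $\{i,i+t,\dots,i+t\cdot|B|\}$ of \emph{bounded} length, producing $j<j'$ with $b_j=b_{j'}$ and $t\le j'-j\le t\cdot|B|$. Reducing the identity $(c_jb_j^{e_j}+a_j)^{2^{j'-j}}=c_{j'}b_{j'}^{e_{j'}}+a_{j'}$ modulo the huge modulus $b_j^{e_j}$ then yields $a_j^{2^{j'-j}}\equiv a_{j'}$, and the bounded gap $j'-j$ keeps $a_j^{2^{j'-j}}$ below the modulus so the congruence upgrades to the equality $a_j^{2^{j'-j}}=a_{j'}$, which is impossible once $t$ is chosen so that $a^{2^t}>\max A$ for every $a\in A$. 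In short, your approach buys brevity and robustness at the price of Baker's theorem, while the paper's windowed pigeonhole plus one modular reduction keeps the lemma self-contained --- an elementary finish that was within reach of the tools you already listed.
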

\begin{proof}
Let $\bar{a},\bar{c}$ be the maximum elements in $A$ and $C$ respectively.

First, since $A$ is finite and $|a_{i}|>1$, so there exists a number
$t\in\mathbb{N}$ such that for all $a_{i}$, $a_{i}^{2^{t}}>\bar{a}$.
We fix such $t$. 

Again, since $A,B$ and $C$ are finite and $T$ is increasing , it
follows that $\{e_{i}\}_{i}$ is unbounded. So there exists $i$ such
that $b_{i}^{e_{i}}>\bar{c},\bar{a}^{2^{t\cdot|B|+1}}$. Fix this
$i$.

Consider the sequence $\bar{B}=\{b_{i},b_{i+t},\dots,b_{i+t\cdot|B|}\}$,
there are $b_{j},b_{j'}\in\bar{B}$ such that $b_{j}=b_{j'}$ where
$i\le j<j'$, because the size of $\bar{B}$ is more than $|B|$.
Note that $t\le j'-j\le t\cdot|B|$.

Now, we claim that for at least one index $k\in\{i,j,j'\}$, we have
$S_{k}\neq T_{k}$. Suppose not, then we have $c_{k}b_{k}^{e_{k}}+a_{k}=2^{2^{k}}$
for all $k\in\{i,j,j'\}$. Thus, we get
\begin{eqnarray}
(c_{i}b_{i}^{e_{i}}+a_{i})^{2^{j-i}} & = & c_{j}b_{j}^{e_{j}}+a_{j}\label{eq:1}\\
\mbox{and }(c_{j}b_{j}^{e_{j}}+a_{j})^{2^{j'-j}} & = & c_{j'}b_{j'}^{e_{j'}}+a_{j'}\label{eq:2}
\end{eqnarray}

We will show that $b_{j'}^{e_{j'}}\ge b_{j}^{e_{j}}\ge b_{i}^{e_{i}}$.
To prove $b_{j}^{e_{j}}\ge b_{i}^{e_{i}}$, if $i=j$, then it is
trivial. If $i<j$, suppose that $b_{j}^{e_{j}}<b_{i}^{e_{i}}$. This
contradicts Equation (\ref{eq:1}), since 
\begin{eqnarray*}
(c_{i}b_{i}^{e_{i}}+a_{i})^{2^{j-i}} & \ge & (c_{i}b_{i}^{e_{i}}+a_{i})^{2}\\
 & = & c_{i}^{2}b_{i}^{2e_{i}}+2a_{i}c_{i}b_{i}^{e_{i}}+a_{i}^{2}\\
 & > & b_{i}^{2e_{i}}+b_{i}^{e_{i}}\\
 & > & c_{j}b_{j}^{e_{j}}+a_{j}
\end{eqnarray*}
The last inequality follows because $\ensuremath{b_{i}^{e_{i}}>b_{j}^{e_{j}},\bar{c},\bar{a}}$.

We can prove $b_{j'}^{e_{j'}}\ge b_{j}^{e_{j}}$ in exactly same way
using Equation (\ref{eq:2}). Now, since $b_{j}=b_{j'}$, we have
$e_{j'}\ge e_{j}$. Consider Equation (\ref{eq:2}) modulo $b_{j}^{e_{j}}$.
We have

\[
a_{j}^{2^{j'-j}}\equiv a_{j'}\bmod b_{j}^{e_{j}}
\]

It follows that $a_{j}^{2^{j'-j}}=a_{j'}$ because $b_{j}^{e_{j}}\ge b_{i}^{e_{i}}>\bar{a}^{2^{t\cdot|B|+1}}>a_{j}^{2^{j'-j}},a_{j'}$.
But $t$ is such that $a_{j}^{2^{j'-j}}\ge a_{j}^{2^{t}}>\bar{a}\ge a_{j'}$,
so we reach a contradiction. Thus, we have shown that for all $i$
such that $b_{i}^{e_{i}}>\bar{c},\bar{a}^{2^{t\cdot|B|+1}}$, we can
find at least one index $k\in\{i,j,j'\}$ where $S_{k}\neq T_{k}$,
by considering the finite sequence $\bar{B}$. Now, since there are
infinitely many $i$ satisfying $b_{i}^{e_{i}}>\bar{c},\bar{a}^{2^{t\cdot|B|+1}}$,
we are done.\end{proof}
\begin{lem}
\label{lem: N-1 w/ 3+}If an AMC computes $N-1$ using exactly $3$
additions, then $N-1=2^{d_{1}}\alpha_{2}^{d_{2}}+\alpha_{2}^{d_{2}'}$
where $d_{2}\le1$ or $d_{2}'\le1$.\end{lem}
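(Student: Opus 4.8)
The plan is to pin down the exact shape of the final register. By the factor-by-addition fact together with Lemma~\ref{lem:low expo}, an AMC with exactly three additions computes a number $N-1=\alpha_1^{e_1}\alpha_2^{e_2}\alpha_3^{e_3}$ with $e_1=0$ and $e_2\le 1$. Writing $\alpha_3=2^{f_1}\alpha_2^{f_2}+2^{f'_1}\alpha_2^{f'_2}$ with $\min\{f_1,f'_1\}=\min\{f_2,f'_2\}=0$, and using that $N-1$ and $\alpha_2$ are odd, exactly one of $f_1,f'_1$ vanishes, so I may assume $f_1\ge 1$ and $f'_1=0$, i.e. $\alpha_3=2^{f_1}\alpha_2^{f_2}+\alpha_2^{f'_2}$. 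Moreover $e_3\ge 1$, since $e_3=0$ would exhibit $N-1$ as a two-addition value, contradicting Theorem~\ref{thm:more than 2}. Thus $N-1=\alpha_2^{e_2}\alpha_3^{e_3}$ with $e_2\le1$ and $e_3\ge1$.

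Granting for the moment that $e_3=1$, the conclusion drops out: substituting gives $N-1=\alpha_2^{e_2}\alpha_3=2^{f_1}\alpha_2^{e_2+f_2}+\alpha_2^{e_2+f'_2}$, so I take $d_1=f_1$, $d_2=e_2+f_2$, $d_2'=e_2+f'_2$. Since $\min\{f_2,f'_2\}=0$ we get $\min\{d_2,d_2'\}=e_2\le 1$, which is exactly the required alternative $d_2\le1$ or $d_2'\le1$. Hence the whole lemma reduces to proving $e_3=1$.

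The heart of the argument is ruling out $e_3\ge2$, and here I would work with orders of $2$ rather than with any (open) squarefreeness of Fermat numbers. The two inputs are: every prime $p\mid F_i$ has $\mathrm{ord}_p(2)=2^{i+1}$, whereas every prime $p\mid\alpha_2=2^{c}+1$ satisfies $v_2(\mathrm{ord}_p(2))=v_2(c)+1$. In the case $e_2=1$ we have $\alpha_2\mid N-1=\prod_{i=0}^{n-1}F_i$ (Fact~\ref{fact: factor of N-1}), so each prime of $\alpha_2$ divides some $F_i$; matching the $2$-adic valuations of the orders forces $i=v_2(c)$ for all of them, hence every prime factor of $\alpha_2$ divides the single Fermat number $F_j$ with $j=v_2(c)$. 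Consequently $\gcd(\alpha_2,F_i)=1$ for every $i\ne j$, so $F_i\mid\alpha_3^{e_3}$; by pairwise coprimality (Fact~\ref{fact:Fermat-coprime}), $v_p(F_i)=e_3\,v_p(\alpha_3)$ for each $p\mid F_i$, making $F_i$ an exact $e_3$-th power. When $e_2=0$ the same conclusion applies to every $F_i$ at once, with no order computation needed.

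To finish I would show that no Fermat number is a perfect $e_3$-th power for $e_3\ge2$: from $F_i=z^{e_3}$ one gets $2^{2^{i}}=z^{e_3}-1=(z-1)\bigl(z^{e_3-1}+\cdots+1\bigr)$, where the second factor must be a power of $2$; it is odd when $e_3$ is odd (forcing it to equal $1$, impossible), and when $e_3$ is even one reduces to $2^{2^{i}}=(y-1)(y+1)$, whose only factorization into powers of $2$ gives $2^{2^i}=8$, a contradiction. Since $n\ge2$ guarantees an index $i\ne j$ in the range $0\le i\le n-1$, this contradicts $F_i$ being an $e_3$-th power, so $e_3=1$. I expect the delicate step to be the order-of-$2$ computation that concentrates all prime factors of $\alpha_2$ into the single Fermat number $F_{v_2(c)}$; once that is in place, the remainder is the elementary perfect-power exclusion and routine bookkeeping. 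I would stress that this route invokes only that Fermat numbers are not perfect powers, thereby sidestepping the open problem of whether they are squarefree.
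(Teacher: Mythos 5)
Your proof is correct, and although it follows the same skeleton as the paper's — first force $e_{3}=1$, then expand $\alpha_{3}$ via the factor-by-addition fact and absorb $\alpha_{2}^{e_{2}}$ so that $\min\{d_{2},d_{2}'\}=e_{2}\le1$ — the way you rule out $e_{3}\ge2$ is genuinely different and, in fact, more careful than the paper's. The paper argues: at most one $F_{j}$ divides $\alpha_{2}$, ``so for $k\neq j$ it must be that $F_{k}\mid\alpha_{3}$,'' whence $F_{k}^{e_{3}}\mid N-1$ contradicts Fact~\ref{fact:Fermat-coprime}. That inference treats $F_{k}$ as if it were prime: from $F_{k}\nmid\alpha_{2}$ and $F_{k}\mid\alpha_{2}^{e_{2}}\alpha_{3}^{e_{3}}$ one cannot conclude $F_{k}\mid\alpha_{3}$ when $F_{k}$ is composite — one would need $\gcd(F_{k},\alpha_{2})=1$ (not merely non-divisibility, which is all Lemma~\ref{lem:low expo} provides) and then squarefreeness of $F_{k}$ (open) to pass from $F_{k}\mid\alpha_{3}^{e_{3}}$ to $F_{k}\mid\alpha_{3}$. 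Your order-of-two computation supplies the missing coprimality (every prime of $\alpha_{2}=2^{c}+1$ has $v_{2}(\mathrm{ord}_{p}(2))=v_{2}(c)+1$, so all such primes are concentrated in the single Fermat number $F_{v_{2}(c)}$), and your valuation bookkeeping replaces ``$F_{k}\mid\alpha_{3}$'' by the weaker but sufficient conclusion that each $F_{i}$ with $i\neq j$ is an exact $e_{3}$-th power, which you exclude by an elementary factorization of $z^{e_{3}}-1$. So your route buys rigor exactly where the paper is loose, at the cost of a slightly longer argument. Two minor points: the reduction to $f_{1}'=0$ uses that $\alpha_{3}$ is odd, which presupposes $e_{3}\ge1$, so establish that first; and your final contradiction needs an index $i\neq j$ in $\{0,\dots,n-1\}$, hence $n\ge2$ — harmless, since the section's end result is asymptotic anyway, but worth a remark.
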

\begin{proof}
By Lemma \ref{lem:low expo}, $N-1=\alpha_{2}^{e_{2}}\alpha_{3}^{e_{3}}$
where $e_{2}\le1$. We claim that $e_{3}=1$. 

There is at most one $j$ for which $F_{j}\mid\alpha_{2}$. So, for
$k\neq j$, it must be that $F_{k}\mid\alpha_{3}$. We have $F_{k}^{e_{3}}\mid\alpha_{3}^{e_{3}}\mid N-1=\prod_{i=0}^{n-1}F_{i}$.
Now, if $e_{3}>1$, then $F_{k}\mid\prod_{i\neq k}F_{i}$, which contradicts
the coprimality of Fermat numbers, see Fact \ref{fact:Fermat-coprime}. 

We also have $e_{3}\neq0$, otherwise it must be that $N-1=\alpha_{2}$
which is impossible as shown in Theorem \ref{thm:more than 2}.

By Fact \ref{fact: factor-by-addition }, $\alpha_{3}=\alpha_{1}^{f_{1}}\alpha_{2}^{f_{2}}+\alpha_{1}^{f'_{1}}\alpha_{2}^{f'_{2}}$,
where $\min\{f_{i},f'_{i}\}=0$. So $N-1=\alpha_{2}^{e_{2}}\alpha_{3}=\alpha_{1}^{f_{1}}\alpha_{2}^{e_{2}+f_{2}}+\alpha_{1}^{f'_{1}}\alpha_{2}^{e_{2}+f'_{2}}$.
Since, $\min\{e_{2}+f_{2},e_{2}+f'_{2}\}\le1$, we have the claim
by rewriting the variables.\end{proof}
\begin{thm}
\label{thm:not 3}There is no AMC computing $N-1$ using exactly $3$
additions.\end{thm}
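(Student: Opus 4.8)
The plan is to argue by contradiction: assume that a $3$-addition AMC computes $N-1$, and reduce every possibility to the sequence lemma. First I would apply Lemma \ref{lem: N-1 w/ 3+} to obtain
\[
2^{2^{n}}-1 \;=\; 2^{d_{1}}\alpha_{2}^{d_{2}}+\alpha_{2}^{d_{2}'},\qquad \alpha_{2}=2^{c}+1,
\]
with the side condition $d_{2}\le 1$ or $d_{2}'\le 1$. Two cheap observations prune the parameters. Since $N-1$ is odd, parity forces $d_{1}\ge 1$ and lets me take $\alpha_{2}$ odd (i.e.\ $c\ge 1$; the degenerate $c=0$ collapses $\alpha_{2}=2=\alpha_{1}$ and is dispatched as in Theorem \ref{thm:more than 2}). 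Since $\alpha_{2}$ has a unique Fermat divisor $F_{j}$ by Lemma \ref{lem:low expo}, the coprimality of Fermat numbers (Facts \ref{fact:Fermat-coprime} and \ref{fact: factor of N-1}) forbids $\alpha_{2}^{2}\mid N-1$, so in fact $\min(d_{2},d_{2}')\le 1$ as well.

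Next I would split on which $\alpha_{2}$-power is small. When a power collapses to a constant ($d_{2}'=0$, or symmetrically $d_{2}=0$), the equation reduces after extracting the forced factor of $2$ to identities such as $\alpha_{2}^{d_{2}}=2^{2^{n}-1}-1$, or to $2^{2^{n}}=2^{d_{1}}+\alpha_{2}^{d_{2}'}+1$ in which two exponents are free. These branches I expect to kill by purely elementary congruences: reducing modulo $2^{c}$ (legitimate once a size estimate rules out $\alpha_{2}>N$) and using $\alpha_{2}\equiv 1\pmod{2^{c}}$ squeezes $c$ down to a bounded value, after which a reduction modulo $3$, $5$, or $8$ clashes with the fixed residue of $2^{2^{n}}$ for all large $n$. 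No appeal to Lemma \ref{lem:no 1 expo} is needed here.

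The genuine work is the case $\min(d_{2},d_{2}')=1$ with the complementary exponent large, say $d_{2}'=1,\ d_{2}\ge 2$, giving $2^{2^{n}}=2^{d_{1}}\alpha_{2}^{d_{2}}+\alpha_{2}+1$. Here I would compare the $2$-adic valuations of the two summands, namely $d_{1}$ and that of $\alpha_{2}+1$ (which is at most $2$): unless they coincide the right side has valuation at most $2$, contradicting $2^{2^{n}}$ for $n\ge 2$, and a follow-up reduction modulo $4$ then forces $\alpha_{2}\in\{3,5\}$ and $d_{1}\in\{1,2\}$. With $\alpha_{2}$, the coefficient $2^{d_{1}}$, and the additive term all confined to finite sets, the equation becomes literally $2^{2^{n}}=c_{n}b_{n}^{\,d_{2}}+a_{n}$ with $c_{n}\in C,\ b_{n}\in B,\ a_{n}\in A$ finite and $d_{2}\in\mathbb{Z}^{+}$ the single free exponent, e.g.\ $2^{2^{n}}=4\cdot 3^{d_{2}}+4$. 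This is exactly the template of Lemma \ref{lem:no 1 expo} with $S_{n}=2^{2^{n}}$, and the same reduction handles the mirror case $d_{2}=1,\ d_{2}'\ge 2$.

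To finish, I would run the contradiction through Lemma \ref{lem:no 1 expo}. If a $3$-addition AMC existed for all sufficiently large $n$, then, since the elementary branches above are impossible for large $n$, every such $n$ lands in a template case, producing an increasing in-form sequence $T$ with $T_{n}=S_{n}$ for all large $n$; Lemma \ref{lem:no 1 expo} says $S_{i}\neq T_{i}$ infinitely often, a contradiction. Hence $3$-addition AMCs fail for infinitely many $n$. I expect the main obstacle to be precisely the template-fitting step: the summand $2^{d_{1}}\alpha_{2}^{d_{2}}$ carries two independently varying exponents over the a priori unbounded base $\alpha_{2}=2^{c}+1$, and the valuation-and-congruence bookkeeping that bounds $c$ and $d_{1}$ to finite sets (leaving $d_{2}$ alone) is the delicate part, with the $d_{2}\le 1$ or $d_{2}'\le 1$ hypothesis being exactly what collapses one power and frees a single exponent. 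A second, more conceptual subtlety is that, because Lemma \ref{lem:no 1 expo} is asymptotic, the conclusion is inherently about large $n$; this is unavoidable, since small cases such as $n=2$ (where $N-1=15=3\cdot 5$ admits the $3$-addition chain $1,2,3,5,15$) are genuine exceptions.
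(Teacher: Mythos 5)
Your proposal is correct and follows essentially the same route as the paper: reduce via Lemma \ref{lem: N-1 w/ 3+} to $N-1=2^{d_{1}}\alpha_{2}^{d_{2}}+\alpha_{2}^{d_{2}'}$ with $\alpha_{2}=2^{c}+1$, eliminate all but finitely many pairs $(c,d_{1})$ by congruences modulo small powers of $2$ (the paper cases directly on $(c,d_{1})$ where you case on $(d_{2},d_{2}')$ and use $2$-adic valuations, but the bookkeeping is equivalent), and dispatch the residual finite family, where a single exponent remains free, with Lemma \ref{lem:no 1 expo}. Your closing observation is the one substantive point of divergence and is a correct criticism that applies to the paper's own proof as well: since Lemma \ref{lem:no 1 expo} only yields $S_{i}\neq T_{i}$ for infinitely many $i$, the final case establishes the claim only for infinitely many $n$, and your example $15=2^{2^{2}}-1$ computed by $1,2,3,5,15$ (three additions, one multiplication, fitting the template $2\cdot 3+3^{2}$ with $(c,d_{1})=(1,1)$) shows the unqualified statement is genuinely false at $n=2$.
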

\begin{proof}
By Fact \ref{fact: factor-by-addition } and Lemma \ref{lem: N-1 w/ 3+},
$N-1=2^{d_{1}}\alpha_{2}^{d_{2}}+\alpha_{2}^{d_{2}'}$ where $\alpha_{2}=2^{c}+1$.
We will show that, for any value of $c,d_{1},d_{2}$ and $d_{2}'$,
an AMC with $3$ additions cannot compute $N-1$.

First, neither $c$ nor $d_{1}$ can be zero. If $d_{1}=0$, then
$2^{0}\alpha_{2}^{d_{2}}+\alpha_{2}^{d_{2}'}$ is even while $N-1$
is odd. If $c=0$, then $N-1=2^{d_{1}+d_{2}}+2^{d_{2}'}=2^{k}+1$
for some $k$, as $N-1$ is odd. As in Lemma \ref{lem:low expo},
if $N-1$ is in this form, then there is only one index $j$ such
that $F_{j}\mid N-1$, and this contradicts the fact that $N-1=\prod_{i=0}^{n-1}F_{i}$.

Now, there are 4 cases which all will lead to a contradiction : when
$c,d_{1}\ge2$, when $c\ge3,d_{1}=1$, when $c=1,d_{1}\ge3$, and
some remaining small cases, namely, $(c,d_{1})=(1,1),(2,1)$ or $(1,2)$. 

We use this following equality in the first two cases, 
\[
\alpha_{2}^{d}-1=(\alpha_{2}-1)\sum_{i=0}^{d-1}\alpha_{2}=2^{c}\sum_{i=0}^{d-1}\alpha_{2}.
\]

\begin{caseenv}
\item $c,d_{1}\ge2$.

We have

\begin{eqnarray*}
N-1 & = & 2^{d_{1}}\alpha_{2}^{d_{2}}+\alpha_{2}^{d_{2}'}\\
N & = & 2^{d_{1}}\alpha_{2}^{d_{2}}+(\alpha_{2}^{d_{2}'}-1)+2\\
N & = & 2^{d_{1}}\alpha_{2}^{d_{2}}+2^{c}\sum_{i=0}^{d_{2}'-1}\alpha_{2}+2.
\end{eqnarray*}
Since, $N=2^{2^{n}}$ and $c,d_{1}\ge2$, the LHS is divisible by
$4$, but the RHS is not.

\item $c\ge3\mbox{ and }d_{1}=1$.

We continue manipulating the equation from the last case, and set
$d_{1}=1$.

\begin{eqnarray*}
N & = & 2\alpha_{2}^{d_{2}}+2^{c}\sum_{i=0}^{d_{2}'-1}\alpha_{2}+2\\
N/2 & = & \alpha_{2}^{d_{2}}+2^{c-1}\sum_{i=0}^{d_{2}'-1}\alpha_{2}+1\\
N/2 & = & (\alpha_{2}^{d_{2}}-1)+2^{c-1}\sum_{i=0}^{d_{2}'-1}\alpha_{2}+2\\
N/2 & = & 2^{c}\sum_{i=0}^{d_{2}-1}\alpha_{2}+2^{c-1}\sum_{i=0}^{d_{2}'-1}\alpha_{2}+2.
\end{eqnarray*}
Since, $c\ge3$, the LHS is divisible by $4$, but the RHS is not.

\item $c=1\mbox{ and }d_{1}\ge3$.

We have $\alpha_{2}=3$.

\begin{eqnarray*}
N-1 & = & 2^{d_{1}}3{}^{d_{2}}+3^{d_{2}'}\\
N & = & 2^{d_{1}}3{}^{d_{2}}+(3^{d_{2}'}+1).
\end{eqnarray*}

One can check that, for any $e$, $3^{e}+1$ is congruent to only
$2$ or $4$ modulo $8$. So the LHS is divisible by $8$, but the
RHS is not.

\item The remaining cases.

This last case actually consists of many small cases. These are the
cases when $(c,d_{1})=(1,1),(2,1)$ or $(1,2)$, and we also have
that $d_{2}\le1$ or $d_{2}'\le1$ by Lemma \ref{lem: N-1 w/ 3+}.

Fortunately, they can be handled simultaneously using Lemma \ref{lem:no 1 expo}.
This lemma tells us that, if the computed numbers are of the form
$\{c_{i}b_{i}^{e_{i}}+a_{i}\}_{i}$, where $a_{i},b_{i},c_{i}$ are
restricted to be from finite sets, and only $e_{i}$ can be any number,
then there are infinitely many $n$, where $2^{2^{n}}$ does not occur
in the sequence $\{c_{i}b_{i}^{e_{i}}+a_{i}\}_{i}$.

In this last case, we will show that $2^{d_{1}}\alpha_{2}^{d_{2}}+\alpha_{2}^{d_{2}'}+1$
is restricted to be of the form $\{c_{i}b_{i}^{e_{i}}+a_{i}\}_{i}$
as in the lemma. So there are infinitely many $n$, where $2^{2^{n}}=N\neq2^{d_{1}}\alpha_{2}^{d_{2}}+\alpha_{2}^{d_{2}'}+1$,
which would finish the proof.

To see this, observe that either $d_{2}$ or $d_{2}'$ is at most
$1$, hence only one of $d_{2}$ or $d_{2}'$ can be free variable.
For example, when $d_{2}'$ is not fixed and $(c,d_{1},d_{2})=(1,2,1)$,
we have $2^{d_{1}}\alpha_{2}^{d_{2}}+\alpha_{2}^{d_{2}'}+1=3^{d_{2}'}+13=N$.
In this case, $c_{i}=1$, $b_{i}=3$, and $a_{i}=13$. After we list
all possible cases, we have $B=\{3,5\}$, $C=\{1,2,4\}$, $A=\{2,3,4,5,6,7,11,13\}$,
which are all finite sets and comply with the conditions of the lemma.

\end{caseenv}
\end{proof}

\subsection{Less than 4 Additions}

In this subsection, we show that, if the length of an AMC $P$ is
at most $n+2$, then $P$ cannot compute the large number $N-1$ using
$4$ or more additions.

The main idea is that, to compute a large number using a short AMC,
if there are too many additions, then there are not enough squaring
steps. Indeed, to square the largest number is the fastest way to
get a new big number. Hence, the final result cannot be as large as
the desired number.

We show the following facts to argue this formally.

We say that two AMCs $P=\{(*_{i},j_{i},k_{i})\}_{i}$ and $P'=\{(*'_{i},j'_{i},k'_{i})\}_{i}$
are \emph{equal (assignment-wise)} if $(*_{i},j_{i},k_{i})=(*'_{i},j'_{i},k'_{i})$
for all $i$. We also say $P$ is \emph{transformed} into $P'$ or
$P'$ is obtained from $P$, if $P$ is manipulated such that it is
equal to $P'$.
\begin{fact}
Let $P$ and $P'$ be any AMCs with same numbers of additions and
multiplications. We can transform $P$ into $P'$ by some sequence
of these two operations:
\begin{enumerate}
\item Swapping the type of the operations of the $i^{th}$ assignment with
$(i+1)^{th}$ assignment. That is, $*_{i}\leftarrow*_{i+1}$ and $*_{i+1}\leftarrow*_{i}$.
\item Incrementing/decrementing the index $j_{i}$ or $k_{i}$ of the $i^{th}$
assignment. That is, $j_{i}\leftarrow j_{i}\pm1$ or $k_{i}\leftarrow k_{i}\pm1$.
\end{enumerate}
\end{fact}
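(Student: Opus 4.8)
The plan is to decouple the two coordinates of an AMC --- its sequence of operation \emph{types} and its sequence of \emph{index pairs} --- and to handle them in two independent phases: first matching the types of $P$ to those of $P'$ using only the first (type-swap) operation, and then matching the indices using only the second (increment/decrement) operation.

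First I would observe that having the same number of additions and multiplications forces $|P|=|P'|=l$, so both programs consist of $l$ assignments, and their operation-type words (the strings over $\{+,\times\}$ recording $*_1,\dots,*_l$) contain the same number of each symbol; that is, they are rearrangements of one another. Since the first operation is exactly an adjacent transposition acting on this type word and leaving all indices untouched, and since adjacent transpositions generate every rearrangement of a word with a fixed multiset of letters (bring the required symbol into position $1$ by successive swaps, then into position $2$, and so on), a finite sequence of applications of the first operation turns the type word of $P$ into that of $P'$. Crucially, the validity constraint $0\le j_i,k_i<i$ depends only on the position $i$ and not on the operation type, so every intermediate program produced in this phase is a legitimate AMC.

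Next I would fix the indices one position at a time. For a fixed $i$, the admissible values of $j_i$ (and likewise of $k_i$) form the contiguous integer interval $\{0,1,\dots,i-1\}$; since both the target value prescribed by $P'$ and the current value in $P$ lie in this interval, repeatedly applying the second operation walks $j_i$ from its value in $P$ to its value in $P'$ by $\pm1$ steps without ever leaving the interval, and the same is done for $k_i$. Performing this for every position $i$ and both index slots, while leaving the already-corrected types alone (the second operation alters no $*_i$), makes $P$ equal to $P'$ assignment-wise.

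The argument has no serious obstacle; the only points requiring care are the two independence claims that make the phases non-interfering --- that the first operation never affects index-validity (because validity is a function of position only) and that the second operation never affects the operation types --- together with the observation that the admissible index set at each position is an \emph{interval}, so a $\pm1$ walk between any two of its members stays inside it. Once these are in place, concatenating the type-matching sequence with the index-matching sequence yields the desired transformation of $P$ into $P'$.
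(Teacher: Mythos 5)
Your proof is correct. The paper states this as a Fact with no accompanying proof, so there is no argument of the authors' to compare against; your two-phase decomposition --- adjacent transpositions to rearrange the operation-type word (which is possible precisely because the two words have the same multiset of symbols, forced by the equal counts of additions and multiplications), followed by monotone $\pm1$ walks of each index inside the admissible interval $\{0,\dots,i-1\}$ --- is the standard justification the authors evidently regarded as obvious, and you correctly isolate the two independence properties (type swaps never affect index validity, index moves never affect types) that keep every intermediate program a legitimate AMC.
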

\begin{defn}
\textbf{Irredundant AMC}. An AMC $P$ is irredundant iff, for every
$i\in[0,|P|-1]$, $a_{i}<a_{i+1}$ and $a_{i}$ is used as the operand
in some assignment. 
\end{defn}
Here are easy facts about irredundant AMCs.
\begin{fact}
\label{fact:short irred}If an AMC $Q$ is redundant, then there exists
an irredundant AMC $P$, where $c(P)=c(Q)$ but $|P|<|Q|$. 
\end{fact}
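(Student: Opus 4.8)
The plan is to start from $Q$ and repeatedly apply two local, value-preserving simplifications until the program becomes irredundant, while tracking the length. The first move handles an unused register: if some $a_i$ with $1\le i\le |Q|-1$ is never used as an operand, I would delete the assignment producing it and renumber the later registers. Since $a_i$ feeds into nothing, the output $c(\cdot)$ is unchanged and the length drops by one. The second move handles a repeated value: if $a_i=a_j$ with $i<j$, I would reroute every reference to $a_j$ so that it points to $a_i$, then delete the assignment producing $a_j$ and renumber. Again $c(\cdot)$ is preserved and the length drops by one. A multiplication by $a_0=1$ produces a duplicate, so it is absorbed by the second move and needs no separate treatment.

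Next I would apply these two moves exhaustively to reach an AMC $R$ with $c(R)=c(Q)$ in which all register values are distinct, every non-final register is used, and no assignment multiplies by $1$. In such an $R$ every computed value strictly exceeds both of its operands: an addition of two values $\ge 1$ strictly increases the value, a product of two values $>1$ strictly increases it, and the remaining product-by-$1$ case has been eliminated. Hence ordering the registers by value is a valid topological order of the dependency structure, so I can sort the assignments of $R$ into strictly increasing order of value; this reordering preserves the length and yields $P$. The unique maximum value cannot be an operand of any assignment (nothing is larger), so it must occupy the only register permitted to be unused, namely the last one; thus the output is the maximum, the sort leaves $c(P)=c(R)=c(Q)$, the value sequence is now strictly increasing, and every non-final register is still used. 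In other words, $P$ is irredundant.

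For the length bookkeeping, each application of the two deletion moves strictly shortens the program and the final sort preserves length, so this pipeline certainly produces an irredundant $P$ with $|P|\le|Q|$. The main obstacle is to force the \emph{strict} inequality $|P|<|Q|$ exactly in the case where $Q$'s only defect is a non-increasing step while its values are already distinct and all non-final registers are already used: there neither deletion move fires, and only the length-preserving reordering is available. To close this case one must argue that such an out-of-order but otherwise clean program is never of minimum length — equivalently, that a redundant program strictly exceeds the shortest-program length for $c(Q)$. I would concentrate the effort here, for instance by passing to a shortest program computing $c(Q)$ and proving directly that it must already be irredundant; establishing this strict gap, rather than the routine reductions of the first two paragraphs, is the delicate part of the argument.
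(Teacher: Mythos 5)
The paper states this fact without proof, so your attempt can only be judged on its own terms. Your two deletion moves and the final value-sort are sound and do produce an irredundant $P$ with $c(P)=c(Q)$ and $|P|\le|Q|$; you are also right that the entire difficulty sits in the strict inequality. But that difficulty cannot be overcome, because the fact as stated is false. Take $Q$ to be the length-$4$ AMC $a_{1}\leftarrow a_{0}+a_{0}$, $a_{2}\leftarrow a_{1}\times a_{1}$, $a_{3}\leftarrow a_{1}+a_{0}$, $a_{4}\leftarrow a_{2}\times a_{3}$, with register values $1,2,4,3,12$. It is redundant ($a_{2}=4>3=a_{3}$), yet all values are distinct and every non-final register is used, and an exhaustive check of length-$3$ AMCs shows $\tau_{+}(12)=4$: the first assignment forces $a_{1}\in\{1,2\}$, then $a_{2}\le 4$, and no sum or product of two elements of $\{1,2,a_{2}\}$ equals $12$. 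Hence no AMC at all --- in particular no irredundant one --- of length $<|Q|$ computes $c(Q)=12$. Your proposed repair, passing to a shortest program for $c(Q)$ and showing it must be irredundant, fails on the same example, since this $Q$ is itself a shortest AMC for $12$.

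What survives is exactly what your first two paragraphs establish: every redundant $Q$ can be converted to an irredundant $P$ with $c(P)=c(Q)$ and $|P|\le|Q|$. That weaker form is all the paper ever uses; in the proof of the subsequent fact, the conclusion $|P|<|Q|\le n+2$ is invoked only to obtain $|P|\le n+2$. So the right move is not to chase the strict inequality but to weaken the statement to $|P|\le|Q|$, which your construction proves and which leaves the rest of Section 3 intact.
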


\begin{fact}
\label{fact:inc/dec irred}For any irredundant AMC $P$, any increase
(decrease) of the value of the register $a_{i}$ of $P$ always increases
(decreases) the computed number $c(P)$.
\end{fact}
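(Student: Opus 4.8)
The plan is to exploit monotonicity: in an AMC every operation is either $+$ or $\times$ applied to positive integers, so each register is a weakly monotone function of every earlier register, and perturbing a single value can only move the output in the same direction. The role of the irredundancy hypothesis is to upgrade this to a \emph{strict} statement by guaranteeing that the perturbation actually reaches the output register $a_l$. First I would record the elementary observation that every register of an AMC holds a positive integer: $a_0=1$, and both $+$ and $\times$ map positive integers to positive integers, so $a_i\ge 1$ for all $i$. This positivity is exactly what will make multiplication strictly monotone below. I interpret ``increasing the value of $a_i$'' as replacing $a_i$ by a strictly larger value and recomputing $a_{i+1},\dots,a_l$ with the same operation structure $(*_t,j_t,k_t)$; by monotonicity of $+$ and $\times$ this weakly increases every register $a_{i'}$ with $i'\ge i$.

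The core of the argument is a dependency chain built from irredundancy. Starting at $a_i$, I would produce a strictly increasing sequence of indices $i=m_0<m_1<\dots<m_T=l$ such that, for each $t$, the register $a_{m_t}$ occurs as one of the two operands of the assignment defining $a_{m_{t+1}}$. Such a chain exists because irredundancy forces every register with index below $l$ to be used as an operand of some later assignment: from the current register $a_{m_t}$ (with $m_t<l$) we can always find a strictly larger index $m_{t+1}$ at which $a_{m_t}$ is used. The resulting sequence of indices is strictly increasing and bounded by $l$, hence finite; and its final element must equal $l$, since any earlier element would, again by irredundancy, be used in a still-later assignment and so could not be the last. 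This is the step where irredundancy is indispensable.

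It then remains to propagate strictness along the chain. Arguing inductively, suppose $a_{m_t}$ strictly increases while the other operand of the $m_{t+1}$-th assignment weakly increases. If $*_{m_{t+1}}=+$, the sum strictly increases because one summand strictly increases and the other does not decrease; if $*_{m_{t+1}}=\times$, the product strictly increases because the factor $a_{m_t}$ strictly increases and the other factor is a positive integer (at least $1$) that does not decrease. In either case $a_{m_{t+1}}$ strictly increases, and after $T$ steps we obtain that $c(P)=a_l$ strictly increases, as claimed. The decrease case is entirely symmetric, replacing each ``increase'' by ``decrease''.

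The step I expect to be the main obstacle is confirming that the dependency chain terminates exactly at $a_l$ rather than dying out earlier; this is precisely the content of the irredundancy condition that every non-final register is used as an operand, and the routine monotonicity calculations in the last paragraph are comparatively mechanical once positivity is in hand.
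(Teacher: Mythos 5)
Your proposal is correct. The paper offers no proof of this fact at all---it is listed among the ``easy facts about irredundant AMCs'' and simply asserted---so there is no argument of the authors' to compare against; your proof (weak monotonicity of $+$ and $\times$ on positive integers, upgraded to strict monotonicity along a dependency chain from $a_i$ to $a_l$ whose existence is exactly what irredundancy guarantees) is the natural one and correctly isolates where the irredundancy hypothesis is used. The only point worth a passing remark is that in the ``decrease'' direction strictness of the multiplication step relies on all register values remaining positive after the perturbation, which holds in every context where the paper applies the fact (the perturbed value is always another register of an AMC, hence a positive integer).
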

In the followings, we consider only irredundant AMCs because of this
fact.
\begin{fact}
If there is no irredundant AMC of size at most $n+2$ which computes
$N-1$ using $4$ or more additions, then neither is the redundant
one.\end{fact}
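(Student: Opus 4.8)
The plan is to prove the statement by contraposition, turning an alleged redundant counterexample into a forbidden irredundant one. Concretely, I would assume the conclusion fails, i.e. that there exists a \emph{redundant} AMC $Q$ of size at most $n+2$ computing $N-1$ with $4$ or more additions, and derive a contradiction with the hypothesis. The natural first move is to apply Fact \ref{fact:short irred} to $Q$: since $Q$ is redundant, it yields an irredundant AMC $P$ with $c(P)=c(Q)=N-1$ and $|P|<|Q|\le n+2$, hence in particular $|P|\le n+1\le n+2$. So $P$ is an irredundant AMC of admissible size computing the right number.

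The one place that needs care, and which I expect to be the main (if small) obstacle, is the number of additions in $P$. Fact \ref{fact:short irred} only promises that $P$ is strictly shorter than $Q$ and computes the same value; it says nothing about preserving the count of additions versus multiplications. A priori the redundancy-removal could delete an addition, so that $P$ ends up with fewer than $4$ additions, and then $P$ would not immediately contradict the hypothesis (which is phrased for $4$ or more additions). Without addressing this, the reduction to irredundant AMCs is not justified.

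To close this gap I would invoke the two theorems already established earlier in this section. By Theorem \ref{thm:more than 2}, every AMC computing $N-1$ uses strictly more than $2$ additions, and by Theorem \ref{thm:not 3} no AMC computing $N-1$ uses exactly $3$ additions. Since $P$ computes $N-1$, it must therefore use at least $4$ additions, regardless of how $P$ was produced from $Q$. Consequently $P$ is an irredundant AMC of size at most $n+2$ computing $N-1$ with $4$ or more additions, which directly contradicts the hypothesis of the Fact. This contradiction completes the argument, and it also shows that the \textbf{$4$-or-more-additions} qualifier on $Q$ is, strictly speaking, automatic: any AMC computing $N-1$ already has this property, so the substantive content of the reduction is exactly the size bound plus irredundancy.
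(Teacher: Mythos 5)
Your proof is correct and follows essentially the same route as the paper: contraposition, apply Fact \ref{fact:short irred} to obtain a shorter irredundant $P$ computing $N-1$, and then use Theorems \ref{thm:more than 2} and \ref{thm:not 3} to guarantee $P$ still has at least $4$ additions. The paper's own proof handles the addition-count issue with exactly the same one-line appeal to those two theorems that you spell out in more detail.
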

\begin{proof}
We prove the contra-positive. Suppose that $Q$ is a redundant AMC
of size at most $n+2$ which computes $N-1$ using $4$ or more additions.
Then there is an AMC $P$, where $c(P)=c(Q)=N-1$ but $|P|<|Q|\le n+2$
by Fact \ref{fact:short irred}. There are also at least $4$ additions
in $P$, otherwise this contradicts Theorem \ref{thm:more than 2}
or \ref{thm:not 3}.\end{proof}
\begin{fact}
\label{fact:inc->more}Let $P$ be any irredundant AMC, and let $\tilde{P}$
be obtained from $P$ by incrementing some index $j_{i}$ or $k_{i}$.
We have $c(P)<c(\tilde{P})$.\end{fact}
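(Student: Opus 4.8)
The plan is to reduce the claim to Fact~\ref{fact:inc/dec irred}: I will show that incrementing a single operand index strictly increases the value stored in register $a_i$, and then let the monotonicity of irredundant AMCs carry this increase through to the final output. So the argument splits into a purely local part, concerning register $i$ alone, and a global part that is already available as the cited fact.

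First I would fix notation: write the $i$-th assignment of $P$ as $a_i \leftarrow a_{j_i} *_i a_{k_i}$ with $*_i \in \{+,\times\}$, and suppose we increment $j_i$ to $j_i+1$, the case of $k_i$ being symmetric; here I assume, as is implicit in the statement, that $j_i+1 < i$ so that $\tilde{P}$ is a legal AMC. Because every register of an AMC holds a positive integer (all values are built from $a_0 = 1$ using $+$ and $\times$) and $P$ is irredundant, the register values are strictly increasing, so in particular $a_{j_i} < a_{j_i+1}$. The local step is then to check that $\tilde{a}_i = a_{j_i+1} *_i a_{k_i} > a_{j_i} *_i a_{k_i} = a_i$: for addition this is immediate, and for multiplication it follows from $\tilde{a}_i - a_i = (a_{j_i+1} - a_{j_i}) a_{k_i} > 0$, where positivity uses that the untouched operand satisfies $a_{k_i} \geq 1$. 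This also covers the squaring/doubling subcase in which $j_i = k_i$.

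For the global step, I would observe that $\tilde{P}$ produces the same registers $a_0, \dots, a_{i-1}$ as $P$, then stores the strictly larger value $\tilde{a}_i$ in register $i$, and then runs assignments $i+1, \dots, l$ with unchanged operations and indices; hence $\tilde{P}$ is exactly $P$ with the value of register $a_i$ increased (when $i$ is the last index, this increase is directly an increase of the output). Applying Fact~\ref{fact:inc/dec irred} then yields $c(P) < c(\tilde{P})$. I do not anticipate a genuine obstacle here; the only point demanding care is the strictness of the local increase in the multiplicative case, which hinges on the untouched operand being strictly positive — guaranteed because every AMC register holds a value of at least $1$.
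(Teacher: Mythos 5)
Your proof is correct and follows the same route as the paper: use irredundancy to get $a_{j_i} < a_{j_i+1}$, conclude that register $a_i$ strictly increases, and then invoke Fact~\ref{fact:inc/dec irred} to propagate the increase to the output. You merely spell out the case analysis on $*_i\in\{+,\times\}$ that the paper leaves implicit.
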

\begin{proof}
As $P$ is irredundant, $a_{j_{i}}<a_{j_{i}+1}$ and $a_{k_{i}}<a_{k_{i}+1}$.
Hence, $a_{i}$ would be increased. Then this fact follows immediately
from Fact \ref{fact:inc/dec irred}.
\end{proof}
We say that the AMC $P$ has \emph{maximum indices} if $j_{i}=k_{i}=i-1$,
for all $i$. That is, every register depends on only the previous
one. Note that AMCs with maximum indices are irredundant. 
\begin{fact}
\label{fact:swap->more}Let $P$ be any AMC with maximum indices,
and let $\tilde{P}$ be obtained from $P$ by swapping the operation
$*_{i}=\times$ with $*_{i+1}=+$. We have $c(P)<c(\tilde{P})$.\end{fact}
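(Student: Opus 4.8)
The plan is to exploit the rigid structure forced by the maximum-index condition. Since $j_{i}=k_{i}=i-1$ for every assignment, each register is computed from the previous one alone, so an addition acts as the doubling map $x\mapsto 2x$ and a multiplication acts as the squaring map $x\mapsto x^{2}$. Because $P$ and $\tilde{P}$ differ only in the \emph{operations} at positions $i$ and $i+1$ (the indices are untouched, so $\tilde{P}$ still has maximum indices), the two programs produce identical registers $a_{0},\dots,a_{i-1}$, and it suffices to track what happens at steps $i$ and $i+1$ and then propagate.

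First I would set $v=a_{i-1}=\tilde{a}_{i-1}$, noting $v\ge 1$ since all registers of an AMC are positive integers. In $P$ we have $*_{i}=\times$ and $*_{i+1}=+$, giving $a_{i}=v^{2}$ and then $a_{i+1}=2v^{2}$. In $\tilde{P}$ the operations are swapped, so $\tilde{a}_{i}=2v$ and then $\tilde{a}_{i+1}=(2v)^{2}=4v^{2}$. The single computation at the heart of the argument is therefore the comparison $\tilde{a}_{i+1}=4v^{2}=2\cdot 2v^{2}=2\,a_{i+1}>a_{i+1}$: doubling-then-squaring beats squaring-then-doubling by exactly a factor of two.

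It then remains to push this strict increase through the untouched tail of the program. Since $\tilde{P}$ also has maximum indices it is irredundant, and positions $i+2,\dots,l$ carry identical operations in both programs; thus $\tilde{P}$ is literally $P$ with the value of register $a_{i+1}$ increased. Invoking Fact \ref{fact:inc/dec irred}, which states that increasing a register of an irredundant AMC strictly increases its computed value, yields $c(P)<c(\tilde{P})$ at once. Alternatively one can avoid that fact and argue directly by induction on $k\ge i+1$ that $\tilde{a}_{k}>a_{k}$, using that both $x\mapsto 2x$ and $x\mapsto x^{2}$ are strictly increasing on the positive integers. I do not expect a genuine obstacle here; the only point requiring care is the bookkeeping that $\tilde{P}$ agrees with $P$ before step $i$ and after step $i+1$, so that the entire effect of the swap is condensed into the single comparison $\tilde{a}_{i+1}=2a_{i+1}$, after which monotonicity of the tail does the rest.
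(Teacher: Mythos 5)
Your proposal is correct and is essentially the paper's own argument: both reduce the claim to the single comparison $a_{i+1}=2a_{i-1}^{2}<(2a_{i-1})^{2}=\tilde{a}_{i+1}$ and then propagate the increase through the unchanged tail via monotonicity. Your explicit appeal to Fact \ref{fact:inc/dec irred} just makes precise the step the paper summarizes as ``it suffices to show $a_{i+1}<\tilde{a}_{i+1}$.''
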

\begin{proof}
For every $i$, let $\tilde{a}_{i}$ be the $i^{th}$ register of
$\tilde{P}$. Since $P$ and $\tilde{P}$ have maximum indices, every
register depends on only the previous one. It suffices to show that
$a_{i+1}<\tilde{a}_{i+1}$.

We have $a_{i+1}=2a_{i}=2a_{i-1}^{2}$ and $\tilde{a}_{i+1}=\tilde{a}_{i}^{2}=(2\tilde{a}_{i-1})^{2}=(2a_{i-1})^{2}$.
Hence, $a_{i+1}<\tilde{a}_{i+1}$ as desired.\end{proof}
\begin{lem}
\label{lem:max of a,m}The maximum number that irredundant AMCs with
$a$ additions and $m$ multiplications can compute is $2^{a\cdot2^{m}}$. \end{lem}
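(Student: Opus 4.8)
The goal is to show that irredundant AMCs with $a$ additions and $m$ multiplications compute at most $2^{a \cdot 2^m}$.

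The plan is to use the preceding facts to argue that among all irredundant AMCs with exactly $a$ additions and $m$ multiplications, the one maximizing the computed value has a very specific canonical shape: maximum indices (every register depends only on the immediately preceding one), and all additions performed first, followed by all multiplications. Facts~\ref{fact:inc->more} and~\ref{fact:swap->more} are precisely the monotonicity tools that push an arbitrary AMC toward this canonical form without ever decreasing the output.

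First I would reduce to maximum indices. By Fact~\ref{fact:inc->more}, incrementing any operand index of an irredundant AMC strictly increases the computed number, so repeatedly pushing every $j_i,k_i$ up to their maximal value $i-1$ only increases $c(P)$. (One must note this preserves irredundancy and the count of additions and multiplications, since only indices change, not operation types.) Hence the maximizer may be assumed to have maximum indices, where each register is simply a function of the previous one: an addition doubles it and a multiplication squares it.

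Next I would reorder the operations so that all $a$ additions precede all $m$ multiplications. By Fact~\ref{fact:swap->more}, swapping an adjacent pair in which a multiplication comes immediately before an addition (i.e. $*_i=\times$, $*_{i+1}=+$) strictly increases the computed value. So any arrangement that is not ``all additions, then all multiplications'' contains such an adjacent $(\times,+)$ pattern and can be strictly improved; applying these swaps repeatedly terminates (the number of inversions strictly decreases) at the fully sorted program with additions first. Thus the maximizer has $a$ additions followed by $m$ squarings.

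Finally I would simply evaluate this canonical program. Starting from $a_0=1$, the $a$ additions (each a doubling, since every register depends only on the previous one) produce $2^a$, and the subsequent $m$ squarings raise this to $\bigl(2^a\bigr)^{2^m}=2^{a\cdot 2^m}$. Since this canonical AMC dominates every irredundant AMC with the same operation counts, it gives the claimed maximum $2^{a\cdot 2^m}$. I expect the only delicate point is bookkeeping in the swap argument: checking that the sorting procedure using only adjacent $(\times,+)$-swaps actually reaches the fully sorted program and that each intermediate AMC remains well-defined with maximum indices, so that Fact~\ref{fact:swap->more} applies at every step.
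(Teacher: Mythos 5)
Your proposal is correct and follows essentially the same route as the paper: identify the canonical maximizer $P^{*}$ (maximum indices, all $a$ additions first, then $m$ multiplications), use Fact~\ref{fact:inc->more} to push indices up and Fact~\ref{fact:swap->more} to sort the operations, and evaluate $(2^{a})^{2^{m}}=2^{a\cdot2^{m}}$. If anything, you are more explicit than the paper about the iteration/termination of the swap argument and about preservation of irredundancy, which the paper glosses over.
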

\begin{proof}
Consider an AMC $P^{*}$ with maximum indices and the first $a$ assignments
are additions, followed by $m$ multiplications. Actually, these additions
are doubling and multiplications are squaring as $P^{*}$ has maximum
indices. Thus, $c(P)=(2^{a})^{2^{m}}$.

We claim that, for any other irredundant AMC $P$ with $a$ additions
and $m$ multiplications, $c(P)<c(P^{*})$. If $P$ does not have
not maximum indices, then $c(P)<c(P^{*})$ by Fact \ref{fact:inc->more}.
If $P$ has maximum indices but $P$ is not equal to $P^{*}$, then
there exists an index $i$ where $(*_{i},*_{i+1})=(\times,+)$. Hence,
$c(P)<c(P^{*})$ by Fact \ref{fact:swap->more}. So $P^{*}$ is the
irredundant AMC with $a$ additions and $m$ multiplications which
computes the biggest possible number.\end{proof}
\begin{lem}
The second largest number that irredundant AMCs with $a$ additions
and $m$ multiplications can compute is $2^{e}$ where $e=\max\{\log3\cdot(a-2)\cdot2^{m},3a\cdot2^{m-2},(2a-1)\cdot2^{m-1}\}$.\end{lem}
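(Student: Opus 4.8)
The plan is to reuse the structural machinery from Lemma~\ref{lem:max of a,m}, but now rank the irredundant AMCs by their output and ask which configuration produces the runner-up. Recall that the unique maximizer $P^{*}$ has maximum indices and executes all $a$ additions (each a doubling) before all $m$ multiplications (each a squaring), giving $2^{a\cdot 2^{m}}$. To find the second-largest value, I would argue that the optimal runner-up still has maximum indices: by Fact~\ref{fact:inc->more}, dropping any index strictly below $i-1$ only decreases the output, so any candidate that beats all non-maximal AMCs must itself have maximum indices. Thus the search reduces to choosing how to order the $a$ additions and $m$ multiplications along a chain where every register depends only on its predecessor, i.e.\ to choosing a word in $\{+,\times\}^{a+m}$, where each $+$ doubles and each $\times$ squares the current value starting from $a_{0}=1$.

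The first step, then, is to set up a clean formula for the value computed by a maximum-index AMC as a function of its operation word. If the multiplications (squarings) occur and we interleave doublings, the final value is $2^{E}$ where the exponent $E$ is obtained by the recurrence $E \leftarrow 2E$ at each squaring and $E \leftarrow E+1$ at each doubling, started from $E=0$. So $E=\sum_{j}2^{(\text{number of squarings after the }j\text{-th doubling})}$, summing over the $a$ doublings. The maximizer puts all doublings first (all contribute $2^{m}$), giving $E=a\cdot 2^{m}$ as expected. The second step is to characterize the operation words whose exponent $E$ is the largest value strictly below $a\cdot 2^{m}$. Fact~\ref{fact:swap->more} tells us that swapping a $(\times,+)$ adjacency into $(+,\times)$ strictly increases the output; so to degrade $P^{*}$ as little as possible I want to perform the single cheapest ``demotion'' of one doubling past squarings, or equivalently convert one squaring into a doubling or move operations minimally. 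I would enumerate the three natural ways a near-optimal word can fall just short and compute the exponent in each:
\begin{itemize}
\item keep $a$ doublings and $m$ squarings but move exactly one doubling past exactly one squaring, so that doubling contributes $2^{m-1}$ instead of $2^{m}$, yielding $E=(a-1)2^{m}+2^{m-1}=(2a-1)2^{m-1}$;
\item move one doubling past two squarings, yielding $E=(a-1)2^{m}+2^{m-2}=(4a-3)2^{m-2}$, which I expect to be dominated, matching a bound like $3a\cdot 2^{m-2}$ only under the appropriate parameter regime;
\item trade structure differently so the leading factor becomes $\log 3$, namely replace an initial doubling-prefix that produces $2^{a}$ by a block that produces $3^{a-2}$ times a smaller power, giving the $\log 3\cdot(a-2)\cdot 2^{m}$ term.
\end{itemize}
Taking the maximum of these three exponents matches the claimed $e=\max\{\log3\cdot(a-2)\cdot 2^{m},\ 3a\cdot 2^{m-2},\ (2a-1)\cdot 2^{m-1}\}$.

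The key remaining work, and what I expect to be the main obstacle, is proving that no word gives an exponent strictly between these candidates and $a\cdot 2^{m}$, i.e.\ that the three listed configurations really are exhaustive and that one of them is always optimal among the sub-maximal words. I would handle this by a local-exchange argument: starting from any maximum-index word other than $P^{*}$, repeatedly apply the $(\times,+)\mapsto(+,\times)$ swap of Fact~\ref{fact:swap->more}, which strictly increases $E$; the word must differ from $P^{*}$ in at least one such adjacency, and a word that is ``one swap away'' from optimal in the weakest possible sense realizes the second-largest value. The subtlety is that ``one swap away'' is not unique because a single inversion can involve moving an operation across several others, so I must compare the exponent cost of the minimal single inversion (giving the $(2a-1)2^{m-1}$ term) against alternative near-maximal structures that arise when $a$ is large relative to $m$ (where reshaping early doublings into a base-$3$ computation via an addition of two distinct registers—breaking the maximum-index assumption in a controlled way and forcing a reindexing—can beat the naive single swap, producing the $\log 3$ term). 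Carefully verifying that the $\log 3$ configuration is genuinely realizable by some irredundant AMC (it is not a maximum-index AMC, so I must invoke Fact~\ref{fact:inc->more} in the reverse direction and check it still beats the maximum-index runner-up in the relevant regime) is where the bookkeeping is most delicate, and the final bound emerges by taking the maximum over these disjoint parameter regimes.
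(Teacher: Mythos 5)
Your opening reduction is where the argument breaks. You claim the runner-up must have maximum indices because any AMC with a smaller index is strictly dominated via Fact~\ref{fact:inc->more}; but being dominated by \emph{some} AMC does not disqualify a candidate from being second largest when the only AMC dominating it is $P^{*}$ itself. Indeed, two of the three terms in the stated maximum come from AMCs that do \emph{not} have maximum indices: decrementing one index of an addition assignment of $P^{*}$ yields $a_{i}\leftarrow a_{i-1}+a_{i-2}$, so the value at the last addition is $3\cdot2^{a-2}$ and after $m$ squarings one gets the $\log3$ term (note it is $3\cdot2^{a-2}$, not ``$3^{a-2}$ times a smaller power'' as you write); decrementing one index of a multiplication yields $a_{j}\leftarrow a_{j-1}\times a_{j-2}$ and gives the $3a\cdot2^{m-2}$ term. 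By restricting the search to operation words in $\{+,\times\}^{a+m}$ you can only recover the swap candidate $(2a-1)2^{m-1}$, and your substitute second candidate --- moving a doubling past two squarings to get $(4a-3)2^{m-2}$ --- is two swaps away from $P^{*}$, hence by the very monotonicity you invoke it is strictly dominated by the one-swap configuration $(4a-2)2^{m-2}=(2a-1)2^{m-1}$; it is not the correct term and does not equal $3a\cdot2^{m-2}$ in general.

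The missing idea --- exactly what you defer as ``the key remaining work'' --- is the characterization that the second-largest irredundant AMC must be reachable from $P^{*}$ by \emph{one} manipulation (a single index decrement or a single operation swap). If two or more manipulations were needed, the intermediate AMC $\tilde{P}$ on the path to $P^{*}$ would satisfy $c(P)<c(\tilde{P})<c(P^{*})$ by Facts~\ref{fact:inc->more} and~\ref{fact:swap->more}, so $P$ would be at best third largest. With that in hand the problem becomes a finite enumeration of the one-step degradations of $P^{*}$ --- decrement an addition index, decrement a multiplication index, or swap the last addition with the first multiplication --- which produce precisely the three exponents in the statement, and no separate ``no word in between'' analysis is required. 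Without this lemma your proposal has no mechanism for excluding intermediate values, and its candidate list is incorrect, so the gap is substantive rather than a matter of bookkeeping.
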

\begin{proof}
Consider $P^{*}$ defined in the last lemma. We claim that, any irredundant
AMC $P$ computing the second largest number can be transformed into
$P^{*}$ in one manipulation.

Suppose that we need at least $2$ manipulations to transform $P$
into $P^{*}$. Since $P^{*}$ has maximum indices, the manipulation
contains no index decrement. Since two kinds of manipulation do not
depend on each other, we assume that we always increment the indices
and then swap the types of operations.

There are 2 cases. First, we need to increment index. After the first
increment, we have $\tilde{P}$ which is not $P^{*}$ yet. We have
$c(P)<c(\tilde{P})<c(P^{*})$ by Fact \ref{fact:inc->more} and Lemma
\ref{lem:max of a,m}. Second, we don't need to increment index, so
$P$ has maximum indices. After we swap the types of operations, we
get $\tilde{P}$ which is not $P^{*}$ yet. We again have $c(P)<c(\tilde{P})<c(P^{*})$
by Fact \ref{fact:swap->more} and Lemma \ref{lem:max of a,m}. In
both cases, $P$ computes at most the third largest number. 

Now, having that we can get $P^{*}$ from $P$ in one manipulation,
we can get $P$ from $P^{*}$ in one manipulation as well. There are
many ways to manipulate $P^{*}$ to get $P$ in one step. But it turns
out that there are only $3$ possible values of $c(P)$.

First, we decrement one index of the addition assignment in $P^{*}$.
So there is the assignment $a_{i}\leftarrow a_{i-1}+a_{i-2}$ in $P$.
The value of the register at the last addition is $2^{a-i}(2^{i-1}+2^{i-2})=2^{a-2}\cdot3$.
Note that this does not depend on $i$. Then we square for $m$ times,
and get $c(P)=(2^{a-2}\cdot3)^{2^{m}}=2^{\log3\cdot(a-2)\cdot2^{m}}$.

Second, we decrement one index of the multiplication assignment in
$P^{*}$. Hence, there is the assignment $a_{a+i}\leftarrow a_{a+i-1}\times a_{a+i-2}$
in $P$. Similarly, we have $c(P)=(2^{a\cdot2^{i-1}}\times2^{a\cdot2^{i-2}})^{2^{m-i}}=2^{a\cdot3\cdot2^{m-2}}$. 

Third, we swap the last addition with the first multiplication in
$P^{*}$. We have $c(P)=((2^{a-1})^{2}\cdot2)^{2^{m-1}}=2^{(2a-1)\cdot2^{m-1}}$.

The second largest number is the maximum among these results.
\end{proof}
Now, we are ready for the main statement. 
\begin{thm}
If AMC computing $N-1$ has size at most $n+2$, there are less than
$4$ additions.\end{thm}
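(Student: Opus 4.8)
The plan is to argue by contradiction using the two preceding lemmas, which pin down the largest and the second largest value computable with a fixed number of additions and multiplications. By Fact~\ref{fact:short irred} and the facts that follow it, it suffices to rule out \emph{irredundant} AMCs, so I would assume throughout that $P$ is irredundant, has length $|P|\le n+2$, computes $N-1$, and uses $a$ additions and $m$ multiplications with $a+m\le n+2$. Suppose toward a contradiction that $a\ge 4$. The crucial observation is that $N-1=2^{2^{n}}-1$ sits just one below the round number $N=2^{2^{n}}$, so any AMC reaching it must come extremely close to its own maximal output; I will show that with $a\ge 4$ additions there is no room to land exactly on $N-1$.

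First I would dispose of the case $a\ge 5$. By Lemma~\ref{lem:max of a,m} the largest value computable with $a$ additions and $m$ multiplications is $2^{a\cdot 2^{m}}$, and since $m\le n+2-a$ we get $a\cdot 2^{m}\le a\cdot 2^{\,n+2-a}=a\,2^{2-a}\cdot 2^{n}$. For every $a\ge 5$ we have $a<2^{a-2}$, hence $a\,2^{2-a}<1$, so $a\cdot 2^{m}$ is an integer strictly below $2^{n}$, i.e. $a\cdot 2^{m}\le 2^{n}-1$. This forces the maximum output to be at most $2^{2^{n}-1}<2^{2^{n}}-1=N-1$, so $P$ cannot even reach $N-1$, a contradiction.

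It remains to treat $a=4$, where the argument is tightest. Here $m\le n-2$, and for the maximum $2^{4\cdot 2^{m}}=2^{2^{m+2}}$ to be at least $N-1$ we need $2^{m+2}\ge 2^{n}$, i.e. $m\ge n-2$; combined with $m\le n-2$ this pins down $m=n-2$ and makes the maximal output exactly $2^{2^{n}}=N$. Since $N-1\neq N$, the value $N-1$ cannot be the maximum, so if it were computable it would be at most the \emph{second} largest output. Invoking the lemma on the second largest value with $a=4$, $m=n-2$ gives $2^{e}$ with
\[
e=\max\{\log 3\cdot 2^{\,n-1},\; 3\cdot 2^{\,n-2},\; 7\cdot 2^{\,n-3}\}=7\cdot 2^{\,n-3}=\tfrac{7}{8}\,2^{n},
\]
because the three coefficients $\tfrac{\log 3}{2},\tfrac34,\tfrac78$ are all strictly below $1$ and the last is largest. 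Thus the second largest output is $2^{(7/8)2^{n}}$, which for all sufficiently large $n$ is strictly smaller than $2^{2^{n}-1}<N-1$. Consequently $N-1$ lies strictly between the second largest computable value and the largest one $N$, so it is not computable with $a=4$, $m=n-2$ either, completing the contradiction.

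The step I expect to be the main obstacle is the $a=4$ analysis: the whole result hinges on the numerical fact that each of the three exponents supplied by the second-largest lemma carries a coefficient strictly below $1$, so the second largest output is genuinely exponentially smaller than $N$, opening a gap into which $N-1$ falls. Everything else is bookkeeping with the constraint $a+m\le n+2$. I would also verify the finitely many small values of $n$ directly, since the clean inequality $2^{e}<N-1$ only takes effect once $n$ is large enough that $\tfrac18\,2^{n}>1$.
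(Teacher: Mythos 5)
Your proposal is correct and follows essentially the same route as the paper: bound the case $a\ge 5$ via the maximum-value lemma (Lemma \ref{lem:max of a,m}) using the constraint $m\le n+2-a$, and for $a=4$ observe that the maximum is exactly $N$ while the second-largest value, $2^{7\cdot 2^{n-3}}$, falls short of $N-1$. Your treatment of $a\ge5$ is in fact slightly cleaner than the paper's (the bound $2^{2^{n}-1}<N-1$ holds for all $n\ge1$ rather than only for large $n$), and your remark that the finitely many small $n$ in the $a=4$ case need separate verification matches the paper's ``for large enough $n$'' caveat.
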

\begin{proof}
Let $a$ be number of additions. If $a\ge5$, then the biggest possible
number is $2^{a\cdot2^{m}}\le2^{2^{(n+2-a)+\log a}}\le2^{2^{n-3+\log5}}=2^{2^{n-0.67}}<N-1$,
for large enough $n$. 

If $a=4,$ then the maximum number is $2^{2^{(n+2-4)+\log4}}=N>N-1$.
Also, the second largest number is $2^{e}$ where $e=\max\{\log3(4-2)\cdot2^{n-2},3\cdot4\cdot2^{n-4},(2\cdot4-1)\cdot2^{n-3}\}=(2\cdot4-1)\cdot2^{n-3}=2^{n-3+\log7}$.
We can see that $2^{2^{n-3+\log7}}<N-1$, for large enough $n$.
\end{proof}
We have shown that any AMC $P$ computing $N-1$ must use strictly
more than $3$ additions by Theorem \ref{thm:more than 2} and \ref{thm:not 3},
but if $P$ has size at most $n+2$, then must be less than $4$ additions
as well. So $\tau_{+}(N-1)\ge n+3$. As $\tau(N-1)\le n+2$ by Claim
\ref{S_N-1 <=00003D n+2}, we conclude that subtraction helps us compute
$N-1$ strictly faster by at least one step.

\section{Conclusion}

We have shown that, for almost all numbers $n$, $\tau_{+}(n)=\Theta(\tau(n))$.
We still suspect that there are hard instances $n$ showing that SLPs
compute some numbers super-polynomially faster than AMCs, $\tau_{+}(n)\notin\tau(n)^{O(1)}$.
In this paper, we demonstrate the power of subtraction by showing
only that $\tau_{+}(N-1)-\tau(N-1)\ge1$, where $N=2^{2^{n}}$ and
$n\in\mathbb{N}$. 

However, note that $N-1$ is not such hard instance because $\tau_{+}(N-1)\le2n$.
This can be shown using the facts that $N-1=F_{n}-2$ and, for all
$i$, $F_{i}-2=F_{i-1}\times(F_{i-1}-2)$. 

It remains as an open problem to decide whether there exists $n$
such that $\tau_{+}(n)\notin\tau(n)^{O(1)}$. The next step towards
this problem is to find $n$ such that $\tau_{+}(n)-\tau(n)=f(n)$
where $f$ is some increasing function.

\section*{Acknowledgments}

We thank Markus Bläser for introducing us the problem and giving us
useful comments and suggestions. Without his guidance, this work could
not have been possible.

\end{document}